\documentclass[11pt,letter]{article}
\usepackage{LEviaGrover-macro}
\usepackage{times}
\usepackage{ifthen}
\usepackage{amstext,amsgen,latexsym,amsmath}
\usepackage{amssymb,amsfonts}

\usepackage[left=1in,right=1in,top=1in,bottom=1in]{geometry}
\setlength{\parskip}{0pt}

\newenvironment{step}
  {
    \begin{enumerate}

  }
  {\end{enumerate}}

\newenvironment{algorithm*}[1]
  {
    \begin{center}
      \hrulefill\\
      \textbf{#1}
  }
  {
    
\vspace{-1\baselineskip}
    \hrulefill
    \end{center}
  }

\newenvironment{protocol*}[1]
  {
    \begin{center}
      \hrulefill\\
      \textbf{#1}
  }
  {
    \vspace{-1\baselineskip}
    \hrulefill
    \end{center}
  }


\renewcommand{\thefootnote}{\fnsymbol{footnote}}

\newcommand{\reg}[1]{{\sf#1}}
\newcommand{\cat}[2]{\mathrm{CAT}_{#1}(#2)}

\begin{document}
\begin{titlepage}

\sloppy

\title{
Computing on Anonymous Quantum Network
  }

\author{
{\large \hspace*{-1ex} $\text{Hirotada Kobayashi}^{\ast\dagger}$
 \hspace*{-1ex}}\\
{\tt hirotada@nii.ac.jp}
\and
{\large \hspace*{-1ex} $\text{Keiji Matsumoto}^{\ast\dagger}$
 \hspace*{-1ex}}\\
{\tt keiji@nii.ac.jp}
\and
{\large \hspace*{-1ex} $\text{Seiichiro Tani}^{\dagger\ddagger}$
 \hspace*{-1ex}}\\
{\tt tani@theory.brl.ntt.co.jp}
}

\date{}

\maketitle
\thispagestyle{empty}
\pagestyle{plain}
\vspace*{-5mm}
\begin{center}
  ${}^{\ast}$Principles of Informatics Research Division,  National Institute of Informatics.\\
  ${}^{\dagger}$Quantum Computation and Information Project, ERATO-SORST, JST.\\
  ${}^{\ddagger}$NTT Communication Science Laboratories, NTT Corporation.\\
\end{center}
\begin{abstract}
  This paper considers distributed computing on an \emph{anonymous
    quantum network}, a network in which no party has a unique
  identifier and quantum communication and computation are available.
  It is proved that the leader election problem can exactly (i.e.,
  without error in bounded time) be solved with at most the same
  complexity up to a constant factor as that of exactly
  computing symmetric functions (without intermediate measurements for a distributed and superposed
  input), if the number of parties is given to every party.
A corollary of this result is a more efficient quantum leader election
algorithm than existing ones: the new quantum algorithm runs in $O(n)$ rounds with bit complexity $O(mn^2)$,
on an anonymous quantum network with $n$ parties and $m$ communication links.
It follows that
all Boolean functions computable on a \emph{non}-anonymous quantum
network can be computed with the same order of complexity as the
quantum leader election algorithm on an anonymous quantum network.
This gives the first quantum algorithm that exactly computes any
computable Boolean function with round complexity $O(n)$ and with
smaller bit complexity than that of existing classical algorithms in
the worst case over all (computable) Boolean functions and network topologies.
More generally,
any $n$-qubit
state
can be shared with that complexity on an anonymous quantum network with $n$ parties.
This paper also examines an important special case:
the problem of sharing an $n$-partite GHZ state
among $n$ parties on an anonymous quantum network.
It is proved that there exists a quantum algorithm that exactly solves 
this problem with rounds linear in
the number of parties
with a \emph{constant}-sized gate set.

\end{abstract}


\bigskip
\centerline{{\bf Keywords}: quantum computing, distributed computing, leader election.}
\end{titlepage}
\maketitle
\pagestyle{plain}
\sloppy

\renewcommand{\thefootnote}{\arabic{footnote}}

\section{Introduction}
\subsection{Background}
Distributed computing algorithms often 
depend on the assumption that messages sent by distinct parties
can be distinguished, which is justified
if every party has a unique identifier.
A more general case without this assumption 
is an \emph{anonymous network},
i.e., a network where no party has a unique identifier.
Computing on anonymous networks
was first considered with the leader election problem in Ref.~\cite{Ang80STOC} and has been further investigated in the literature (e.g., Refs.~\cite{ItaRod81FOCS,ItaRod90InfoComp,afek-matias94,KraKriBer94InfoComp,
YamKam96IEEETPDS-1,YamKam96IEEETPDS-2}).
This setting makes it significantly hard or even impossible to solve
some distributed computing problems that are easy to solve on a non-anonymous network.

The leader election problem is the problem of  
electing a unique leader from among distributed parties
and it is a fundamental problem:
Once it is solved, 
the leader can gather all distributed input and locally solve any distributed computing problem
(except cryptographic or fault-tolerant problems) (e.g., Ref.~\cite{Lyn96Book}).
However, it was proved in Refs.~\cite{Ang80STOC,YamKam88PODC,YamKam96IEEETPDS-1}
that
no classical algorithm 
can exactly solve the
leader election problem on anonymous networks for a
certain broad class of network topologies,
such as rings and a certain family of regular graphs,
even if the network
topology (and thus the number of parties) is known to each party prior
to algorithm invocation.
Here,  an algorithm is said to \emph{exactly solve} a problem
  if it solves the problem without fail in bounded time.
Thus, many other problems
have also been 
studied
to clarify their solvability on anonymous networks:
some were shown
to be exactly solvable (for certain families of graphs) and others were not ~\cite{YamKam96IEEETPDS-1}. 
For instance, any symmetric Boolean function can be
computed on an anonymous network of any unknown topology, 
if the number of parties is given to each party \cite{YamKam88PODC,YamKam96IEEETPDS-1,KraKriBer94InfoComp};  
in particular, efficient algorithms are known for
various regular graphs (e.g., Refs.~\cite{AttSniWar88JACM, KraKri92SPDP,KraKriBer94InfoComp,KraKri97JALG}).

Surprisingly, the situation is quite different on \emph{quantum networks}, i.e., networks in which
quantum computation and communication are available.
It was
proved by the present authors in Ref.~\cite{TanKobMat05STACS}%
\footnote{
A nice survey of this article is found in Ref.~\cite{DenPan06SIGACT}
} 
that the leader election
problem can exactly be solved on an anonymous quantum network of any unknown
topology,
if 
the number of parties
is given to every party.  
This implies that quantum power substantially changes the computability of the leader election problem
on anonymous networks.

Our questions are then as follows: How powerful is quantum information
for solving distributed computing tasks?  Does quantum power change
the hardness relation among distributed computing problems (e.g.,
problem A is harder than problem B in the classical setting, while
they have similar hardness in the quantum setting)?  We give an answer
to these questions by comparing the leader election problem with
computing symmetric functions, well-known problems that can be solved
even on an anonymous classical network.  As a corollary, we provide a more
efficient quantum leader election algorithm than existing ones.  For
every Boolean function computable on a \emph{non}-anonymous quantum
network (a quantum network in which every party has a unique
identifier), this yields a quantum algorithm that computes it on an
anonymous quantum network with the same order of complexity as the
quantum leader election algorithm.  In distributed quantum
computing, sharing a quantum state among parties
is also a fundamental problem.  The above algorithm of computing Boolean functions
actually solves the problem of $n$ parties sharing any quantum state.
We also examine an important special case: the problem of
sharing an $n$-partite GHZ state among $n$ parties on an anonymous
quantum network, called the GHZ-state sharing problem.

\subsection{Main Results}
Hereafter, we assume that the underlying graphs of networks are undirected
and that no faults exist on networks.
\subsubsection{Quantum Leader Election}
Our first result shows that
the leader election problem is not harder than computing symmetric functions
on anonymous quantum networks.
Let $n$ be the number of parties
and $H_k\colon \set{0,1}^n\rightarrow\set{\true,\false}$ be the function over
distributed $n$ bits, which is $\true$
if and only if the Hamming weight, i.e., the sum, of the $n$ bits is
$k$. 
Let $\calH_k$ be any quantum algorithm that exactly 
computes $H_k$ without intermediate measurements%
\footnote{ The condition ``without intermediate measurements'' is
  required for clear definition. 
It is easy to convert any quantum
  algorithm involving intermediate measurements into a quantum
  algorithm not involving them by postponing all the measurements.
  However, this conversion may increase the bit complexity and, thus,
  these two kinds of algorithms should be considered separately.
For instance, consider a quantum algorithm involving intermediate measurements
that uses a different subset of communication links for each intermediate measurement results,
in which case the algorithm uses the union of the subsets when postponing
the intermediate measurements.
}
on an anonymous quantum network, and let $Q^{\operatorname{rnd}}(\calH_k)$ and
$Q^{\operatorname{bit}}(\calH_k)$ be the worst-case round and bit complexities of
$\calH_k$ over all possible quantum states as input.\footnote{ Note
  that inputs are not limited to classical inputs when we consider the
  complexities of quantum algorithms that compute classical
  functions. For instance, $\calH_k$ can take a state of the form
  $\sum_{\vec{x}} \alpha_{\vec{x}} \ket{\vec{x}}$ as input, where
  $\alpha_{\vec{x}} \in \Complex$ and $\vec{x} \in \{0,1\}^n$. This
  takes it into account that the algorithm may use smaller amounts of
  communications when restricted to classical inputs. For instance,
  for each classical input, the algorithm may use a different subset
  of communication links, in which case it uses the union of these
  subsets of communication links when the input is a superposition of
  such classical inputs, and may result in the increase of
  communication complexities. Hence, the correct way of defining
  complexities of quantum algorithms that are used as subroutines is
  taking the maximum over all possible inputs, including quantum
  inputs.}
\begin{theorem}
\label{th:LE}
If the number $n$ of parties is provided to each party,
the leader election problem can exactly be solved 
in 
$O(Q^{\operatorname{rnd}}(\calH_0)+Q^{\operatorname{rnd}}(\calH_1))$ 
rounds
with bit complexity
$O(Q^{\operatorname{bit}}(\calH_0)+Q^{\operatorname{bit}}(\calH_1))$
on  an anonymous 
quantum network of any unknown topology.
\end{theorem}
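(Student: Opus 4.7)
The plan is a distributed \emph{exact} amplitude amplification in which the two Grover-style reflections are supplied by coherent calls to $\calH_0$ and $\calH_1$. Using the known value of $n$, each party first prepares a single private qubit in the state $\cos(\theta/2)|0\rangle+\sin(\theta/2)|1\rangle$ for an angle $\theta$ fixed below; the joint state is the product $|\psi_0\rangle=(\cos(\theta/2)|0\rangle+\sin(\theta/2)|1\rangle)^{\otimes n}$, and a computational-basis outcome of Hamming weight exactly $1$ is precisely what leader election has to produce.

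Next, I would realize the Grover iterate using only local gates and coherent invocations of $\calH_0,\calH_1$. The reflection about the weight-one subspace is obtained by running $\calH_1$ on a fresh ancilla to compute the predicate $H_1(\vec x)$, conditioning a phase on that ancilla, and uncomputing with $\calH_1^{\dagger}$. The diffusion reflection $2|\psi_0\rangle\langle\psi_0|-I$ factors as $U(2|0^n\rangle\langle 0^n|-I)U^{\dagger}$, where $U$ is the purely local initial-state preparation; the inner reflection about $|0^n\rangle$ is the reflection about the unique weight-zero string, and is implemented identically but with $\calH_0$ in place of $\calH_1$. Each Grover iterate therefore costs one coherent call to $\calH_1$ and one to $\calH_0$, plus local operations, and no other non-local resource is invoked.

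Then I would apply exact amplitude amplification \`a la Brassard--H\o yer--Mosca--Tapp. Writing the weight-one amplitude of $|\psi_0\rangle$ as $\sin\phi$, $k$ iterates rotate it to $\sin((2k+1)\phi)$; the natural choice $\sin^2(\theta/2)=1/n$ already gives $\sin^2\phi\approx 1/e$, a constant independent of $n$, so some constant $k$ satisfies $(2k+1)\phi\approx\pi/2$, and a mild perturbation of $\theta$ (computed locally from $n$) makes the equality exact. After those $k$ iterates the joint state lies entirely in the weight-one subspace; a computational-basis measurement then deterministically produces a unique party holding $1$, who outputs itself as the leader.

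The main obstacle is making the amplification \emph{exact} rather than merely high-probability: one has to argue that for every $n$ a choice of $\theta$ and an integer $k=O(1)$ exist for which $(2k+1)\phi(\theta,n)=\pi/2$ holds on the nose, and handle a few small-$n$ cases separately. Granted this, the protocol uses $O(1)$ coherent invocations of $\calH_0$ and $\calH_1$ interleaved with local gates, giving round complexity $O(Q^{\operatorname{rnd}}(\calH_0)+Q^{\operatorname{rnd}}(\calH_1))$ and bit complexity $O(Q^{\operatorname{bit}}(\calH_0)+Q^{\operatorname{bit}}(\calH_1))$ as claimed.
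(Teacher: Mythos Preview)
Your approach is essentially the paper's: prepare a biased product state, mark the Hamming-weight-one subspace with a coherent call to $\calH_1$, implement the diffusion reflection via $\calH_0$, and amplify. The paper, however, handles what you call ``the main obstacle'' differently and more cleanly. Rather than perturbing the preparation angle $\theta$ so that $(2k+1)\phi=\pi/2$ for an integer $k$, it fixes $\sin^2(\theta/2)=1/n$ (so the weight-one probability is $s(n)=(1-1/n)^{n-1}>1/4$) and invokes the \emph{phase-adjusted} exact amplitude amplification of Brassard--H{\o}yer--Mosca--Tapp: since $a=s(n)$ is known and exceeds $1/4$, a \emph{single} application of $Q(\calA,\chi,\phi_a,\theta_a)$ with the computable non-$\pi$ phases $\phi_a,\theta_a$ already lands exactly in the weight-one subspace. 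This removes the need for your continuity/perturbation argument and any small-$n$ case analysis.

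One point you gloss over that the paper makes explicit: in an anonymous network the conditional phase cannot be applied by ``the'' party holding the ancilla, because every party holds a copy of the $H_1$ (or $H_0$) output. The paper has each of the $n$ parties apply the $n$-th root $e^{i\theta_{s(n)}/n}$ (respectively $e^{i\phi_{s(n)}/n}$) conditioned on its local ancilla, so the phases multiply to the desired global factor. Your write-up should say this; otherwise the reflection step is not well-defined in the anonymous model.
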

This is the first non-trivial characterization of the complexity
of leader election relative to computing Boolean functions
on anonymous 
quantum networks.  This
does not have a classical counterpart, since,
for some network topologies (e.g., rings),
symmetric Boolean
functions can exactly be computed \cite{YamKam88PODC,YamKam96IEEETPDS-1,KraKriBer94InfoComp} but 
a unique leader cannot exactly be
elected \cite{YamKam96IEEETPDS-1}.  
In fact, any symmetric function can exactly be computed 
on an anonymous classical network of any unknown topology
(and thus, on an anonymous quantum network). 
Therefore, Theorem~\ref{th:LE} subsumes the computability result
in Ref.~\cite{TanKobMat05STACS} 
that the leader election problem can
exactly be solved on an anonymous quantum network.
Our second result is that computing $H_1$ is reducible to computing $H_0$. 
\begin{theorem}
\label{th:H1}
If the number $n$ of parties is provided to each party,
$H_1$ can exactly 
be computed 
without intermediate measurements
for any possible quantum states as input
in 
$O(Q^{\operatorname{rnd}}(\calH_0))$ rounds
with bit complexity 
$O(n\cdot Q^{\operatorname{bit}}(\calH_0))$
on  an anonymous 
quantum network of any unknown topology.
\end{theorem}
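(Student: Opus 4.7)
The starting point of the proof is the identity
\[
H_1(\vec{x}) \;=\; \bigoplus_{i=1}^{n} H_0(\vec{x} \oplus e_i),
\]
where $e_i$ is the $i$-th standard unit vector: the right-hand side counts, modulo $2$, the number of indices $i$ for which $\vec{x}\oplus e_i = \vec{0}$ (equivalently, $\vec{x}=e_i$), and this count is $1$ iff $H_1(\vec{x})=1$. To match the claimed budget, the plan is to evaluate the $n$ summands by $n$ parallel invocations of $\calH_0$ (yielding $O(Q^{\operatorname{rnd}}(\calH_0))$ rounds and $O(n\cdot Q^{\operatorname{bit}}(\calH_0))$ bits) and then XOR their answer bits together into a single output qubit.

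The fundamental difficulty created by anonymity is that no single party can be classically singled out to be the flipper of the $i$-th invocation. My approach is to resolve this with a quantum coordination register in which, in superposition, each party is assigned a distinct ``slot'' in $\{1,\ldots,n\}$. Concretely, I would prepare the permutation-superposition state
\[
|\Pi\rangle \;=\; \frac{1}{\sqrt{n!}}\sum_{\pi\in S_n} \bigotimes_{j=1}^n |\pi(j)\rangle_j,
\]
which is invariant under permutations of the parties (and hence is a legitimate target for an anonymous protocol), by means of a symmetric sub-protocol whose own round and bit complexities are absorbed into the bounds above.

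With $|\Pi\rangle$ in hand, each party, controlled on its own slot register, applies a CNOT from that slot register into its bit of the corresponding one of $n$ parallel copies of the input. In copy $k$ the input register then holds $\vec{x}\oplus e_{\pi^{-1}(k)}$ (in superposition over $\pi$), since exactly one party---the one with slot $k$---flipped there. Applying $\calH_0$ in parallel to all $n$ copies, XOR-ing the answer bits into a single output qubit, and finally uncomputing everything (the $n$ input copies, the $\calH_0$ scratch, the local CNOTs, and $|\Pi\rangle$) leaves
\[
\bigoplus_{k=1}^n H_0\!\big(\vec{x}\oplus e_{\pi^{-1}(k)}\big) \;=\; \bigoplus_{i=1}^n H_0(\vec{x}\oplus e_i) \;=\; H_1(\vec{x})
\]
in the output qubit, independently of $\pi$, since XOR is symmetric in its arguments. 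The procedure is unitary and leaves the input register intact, so it handles arbitrary superposed inputs as required.

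The hardest step is the exact preparation of $|\Pi\rangle$ (or any coordination state that guarantees each of $e_1,\ldots,e_n$ appears as the flip pattern of exactly one parallel invocation) on an anonymous network within the target complexity; demanding exactness rather than an approximation is precisely what anonymity makes delicate, and this preparation is the real technical content of the theorem. Once it is in place, the rest of the algorithm is local control, parallel calls to $\calH_0$, and textbook uncomputation.
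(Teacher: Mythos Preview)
Your identity $H_1(\vec{x})=\bigoplus_i H_0(\vec{x}\oplus e_i)$ is correct, and the idea of symmetrising via a permutation superposition is natural. But the proposal has a genuine gap, and you already name it: the entire difficulty has been pushed into the exact preparation of $|\Pi\rangle$, with no argument that this can be done from $\calH_0$ alone within the stated budget. This is not a minor leftover. Measuring $|\Pi\rangle$ assigns a distinct label in $\{1,\dots,n\}$ to every party, which is at least as strong as electing a leader; hence exact preparation of $|\Pi\rangle$ on an anonymous network is at least as hard as $\LE_n$. Since the whole purpose of Theorems~\ref{th:LE} and~\ref{th:H1} together is to reduce $\LE_n$ to $H_0$, invoking a primitive of leader-election strength inside the proof of Theorem~\ref{th:H1} is circular. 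You would need an independent construction of $|\Pi\rangle$ (or of any state realising a bijection between parties and the $n$ parallel copies) using only $\calH_0$, and nothing in the proposal suggests how to do that; the naive approach of projecting $\big(\frac{1}{\sqrt{n}}\sum_k|k\rangle\big)^{\otimes n}$ onto permutations succeeds with probability $n!/n^n$, far too small for a single round of amplitude amplification.

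The paper's proof avoids any such coordination state. It first uses $\calH_0$ to test whether $|\vec{x}|=0$. For the remaining task of separating $|\vec{x}|=1$ from $|\vec{x}|\ge 2$, it exploits a symmetric randomised test: every party with $x_i=1$ places a fair coin in a fresh register, and one checks whether the resulting bits are all equal (a ``consistency'' predicate $C_S$, which itself reduces to two calls to $\calH_0$). If $|\vec{x}|\le 1$ the bits are trivially consistent; if $|\vec{x}|=t\ge 2$ they are inconsistent with probability $1-2^{1-t}\ge 1/2$. One step of exact amplitude amplification boosts this to certainty, but it needs the exact value of $t$; the paper handles this by running the amplification in parallel for every guess $t'=2,\dots,n$ and declaring $|\vec{x}|\ge 2$ iff some branch reports inconsistency. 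The $n-1$ parallel branches account for the factor $n$ in the bit complexity, and all communication is through $\calH_0$ (and its inverse), giving the stated bounds. No party is ever singled out, and no coordination state is needed.
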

Theorem~\ref{th:LE} together with Theorem~\ref{th:H1} implies that the complexity of the leader election problem
is characterized by that of computing $H_0$.
This would be helpful in intuitively understanding the hardness of the leader election problem on an anonymous quantum network,
since computing $H_0$ can be interpreted as just a simple problem of checking if all parties
have the same value.

Since Theorem~\ref{th:LE} (Theorem~\ref{th:H1}) is
proved by quantumly reducing the leader election problem (resp.~computing $H_1$) to computing
$H_0$ and $H_1$ (resp.~computing $H_0$), the theorems provide ways of developing quantum leader election algorithms
by 
plugging in 
algorithms that compute $H_0$ (and $H_1$).
Since there is a classical
algorithm that exactly computes $H_0$ in $O(n)$ rounds with bit
complexity $O(mn)$ 
for the number $m$ of edges of the underlying graph
(e.g., Ref.~\cite{KraKriBer94InfoComp})
and it can be converted into a quantum algorithm with the same complexity up to a constant factor,
Theorem~\ref{th:LE} together with Theorem~\ref{th:H1}
yields a quantum leader election
algorithm.
\begin{corollary}
\label{cr:LE}
The leader election problem can exactly be solved in
$O(n)$ rounds with bit complexity $O(mn^2)$
on an anonymous 
quantum network for any unknown topology,
if the number $n$ of parties is given to every party,
where $m$ is the number of edges of the underlying graph.
\end{corollary}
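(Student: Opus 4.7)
The plan is to simply compose Theorems~\ref{th:LE} and~\ref{th:H1} with a known classical subroutine for $H_0$. First I would invoke the classical algorithm of Ref.~\cite{KraKriBer94InfoComp} that exactly computes $H_0$ on any anonymous network of unknown topology in $O(n)$ rounds with bit complexity $O(mn)$, and lift it to a quantum algorithm $\calH_0$ without intermediate measurements by the standard deferred-measurement device: every classical bit sent is replaced by a qubit prepared in the corresponding computational-basis state, every local conditional operation is replaced by a coherent controlled operation on an ancilla, and all measurements are pushed to the very end. Because this transformation uses the same edges in the same rounds as the classical protocol, it preserves round and bit complexities up to a constant factor, so I obtain $Q^{\operatorname{rnd}}(\calH_0)=O(n)$ and $Q^{\operatorname{bit}}(\calH_0)=O(mn)$.

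Next, I would feed this $\calH_0$ into Theorem~\ref{th:H1}. The theorem immediately returns an algorithm $\calH_1$ that exactly computes $H_1$ without intermediate measurements, valid on arbitrary quantum inputs, in $O(Q^{\operatorname{rnd}}(\calH_0))=O(n)$ rounds and $O(n\cdot Q^{\operatorname{bit}}(\calH_0))=O(mn^2)$ bits. Hence $Q^{\operatorname{rnd}}(\calH_1)=O(n)$ and $Q^{\operatorname{bit}}(\calH_1)=O(mn^2)$.

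Finally, plugging the pair $(\calH_0,\calH_1)$ into Theorem~\ref{th:LE} yields a quantum leader election algorithm whose round complexity is
\[
O\bigl(Q^{\operatorname{rnd}}(\calH_0)+Q^{\operatorname{rnd}}(\calH_1)\bigr)=O(n),
\]
and whose bit complexity is
\[
O\bigl(Q^{\operatorname{bit}}(\calH_0)+Q^{\operatorname{bit}}(\calH_1)\bigr)=O(mn+mn^2)=O(mn^2),
\]
which is exactly the claimed bound.

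The only delicate point, and hence where I would spend most of the care, is the classical-to-quantum lifting: I must check that deferring measurements does not enlarge the set of communication links used in any given round, because the complexities in Theorems~\ref{th:LE} and~\ref{th:H1} are defined as a worst case over possibly superposed inputs (as stressed in the footnotes attached to Theorem~\ref{th:LE}). This is why it matters that the standard anonymous-network algorithm for $H_0$ has a fixed, input-independent communication schedule; its deferred-measurement simulation then inherits exactly the same schedule and no union-of-supports blow-up occurs, so the constant-factor bound on $Q^{\operatorname{bit}}(\calH_0)$ really does hold in the quantum sense required by the two theorems.
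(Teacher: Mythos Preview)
Your proposal is correct and follows essentially the same route as the paper: obtain an $H_0$-algorithm with $O(n)$ rounds and $O(mn)$ bits, feed it through Theorem~\ref{th:H1} to get $\calH_1$, and then invoke Theorem~\ref{th:LE}. The only cosmetic difference is that the paper, instead of citing Ref.~\cite{KraKriBer94InfoComp}, writes out the obvious OR-flooding $H_0$-algorithm explicitly (each party repeatedly sends the OR of what it has seen to all neighbors for $\Delta\le n$ rounds) and records it as a separate claim; your extra care about the input-independent communication schedule is well placed and is exactly why that simple flooding algorithm lifts cleanly to the quantum setting.
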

This leader election algorithm has better round and bit complexity than
existing algorithms --- the two quantum algorithms given in
Ref.~\cite{TanKobMat05STACS} have the round [bit] complexity of
$O(n^2)$ [$O(mn^2)$] and $O(n \log n)$ [$O(mn^4 \log n)$],
respectively.
Actually, the proofs of Theorems ~\ref{th:LE} and \ref{th:H1} 
can be carried over asynchronous networks in a straightforward manner.
Thus, the theorems hold for asynchronous networks.

\subsubsection{Quantum State Sharing}
Once a unique leader is elected, 
it is possible to 
construct a spanning tree and
assign a
unique identifier drawn from $\set{1,\dots ,n}$ to each party
with the same order of complexity as that of electing a unique leader on anonymous quantum networks.%
\footnote{The problem of assigning unique identifiers drawn from a small domain
has been widely studied even for non-anonymous networks since the length of each identifier
has a great influence over the bit complexity  of many problems.}
Then, the leader can recognize the underlying graph 
by gathering along the spanning tree the adjacency matrices of subgraphs with a unique identifier on each node.
This implies that, 
if every party $i$ is given a bit $x_i$ as input,
a unique leader (who is elected by the leader election algorithm)
can compute any Boolean function $f(x_1,\dots ,x_n)$ that depends on 
the underlying graph $G$ with node label $x_i$s
(and send the function value to every party along the spanning tree).
Here, the index $i$ of each party is introduced just for explanation,
and it is not necessarily the same as the identifier assigned by the leader to the party having $x_i$.
An example of $f$ is a majority function that is $\true$ if and only if the sum over all $x_i$'s is more than $n/2$.
Another example is a function that is $\true$ if and only if there is a cycle in which each node $i$ has input $x_i=1$.
Similarly, if each party is given a qubit as node label so that the
$n$ parties share some $n$-qubit state $\xi$, the leader can generate
any quantum state $\rho$ computable from $\xi$ and the underlying
graph $G$.
\begin{corollary}
\label{cr:Boolean}
Suppose that every party $i$ is given the number $n$ of parties and a
qubit as node label so that the $n$ parties share some $n$-qubit state
$\xi$. Let $\rho$ be any $n$-qubit quantum state computable from $\xi$
and the underlying graph.
Then, state $\rho$ can exactly be shared among the $n$
parties in $O(n)$ rounds with bit complexity $O(mn^2)$ on an anonymous
quantum network,
where $m$ is the number of edges of the underlying graph. 
A special case of $f$ is a Boolean function that is
determined by the underlying graph in which each node $i$ is labeled
with a bit $x_i$.  If every party $i$ is given $n$ and $x_i$, function
$f$ can exactly be computed in $O(n)$ rounds with bit complexity
$O(mn^2)$ on an anonymous quantum network.
\end{corollary}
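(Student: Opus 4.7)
The plan is to reduce the problem to three subroutines whose complexities are all within the claimed budget: quantum leader election, non-anonymous topology recognition, and coherent state collection/redistribution along a spanning tree. The paragraph preceding the corollary already sketches the outline; my job in the proof is to check that every stage is (i) exact, (ii) compatible with quantum superpositions of inputs, and (iii) bounded by $O(n)$ rounds and $O(mn^2)$ bits.

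First, I would invoke Corollary~\ref{cr:LE} to exactly elect a unique leader in $O(n)$ rounds and $O(mn^2)$ bits. From that leader, a breadth-first spanning tree of $G$ and a unique labelling of the parties by $\{1,\dots,n\}$ can be computed in $O(n)$ rounds and well within $O(mn^2)$ bits; each non-leader party then sends its (now labelled) adjacency list up the tree. Since each list has $O(n\log n)$ bits and each of the $O(m)$ edges is used $O(n)$ times in the convergecast, the leader learns the entire adjacency matrix within the same $O(n)$-round, $O(mn^2)$-bit budget. At the end of this stage, the leader holds a classical description of $G$ and every party knows its own identifier.

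Second, to actually share $\rho$, the parties transfer the input qubits to the leader. Using the spanning tree, each of the $n-1$ non-leader qubits is moved to the leader by $O(n)$ sequential SWAP operations (or teleportations with locally prepared EPR pairs) along tree edges, giving $O(n)$ rounds and $O(n^2)$ additional bits. The leader, now holding $\xi$ together with the description of $G$, locally applies the unitary (dilated from the map $\xi\mapsto\rho$, with ancillas) that is guaranteed to exist by the hypothesis that $\rho$ is computable from $\xi$ and $G$. It then distributes the $n$ output qubits to their designated holders by reversing the gathering procedure, again in $O(n)$ rounds. The special case of computing a Boolean function $f$ follows by taking $\xi=\ket{x_1}\otimes\cdots\otimes\ket{x_n}$ and $\rho=\ket{f(x_1,\dots,x_n)}^{\otimes n}$: every party then holds a copy of $f(x_1,\dots,x_n)$.

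The main obstacle is coherence: because $\xi$ may be an arbitrary (possibly externally entangled) quantum input, no intermediate measurement is allowed, and every ``classical'' workspace produced during leader election, tree construction, identifier assignment, and topology recognition must eventually be uncomputed so that the final joint state is exactly $\rho$ tensored with a fixed ancilla state. Fortunately, the subroutines invoked from Theorem~\ref{th:LE}, Theorem~\ref{th:H1}, and Corollary~\ref{cr:LE} are exact and reversible by assumption, so their circuits can be run forward to produce the leader/tree/identifiers, used to drive the quantum gather--compute--scatter step, and then run in reverse to erase all auxiliary data without disturbing $\rho$. This reversal at most doubles the round and bit complexity, preserving the $O(n)$ and $O(mn^2)$ bounds.
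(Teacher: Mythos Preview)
Your first two paragraphs match the paper's proof almost exactly: elect a leader via Corollary~\ref{cr:LE}, build a spanning tree and assign identifiers, convergecast adjacency information so the leader learns $G$ (this is the content of Lemma~\ref{lm:graph-recognition}), then ship the $n$ input qubits to the leader, let it compute $\rho$ locally, and ship the output qubits back along the tree. The complexity bookkeeping is the same.

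Your third paragraph, however, introduces a concern the paper does not share, and need not share. You assert that because $\xi$ may be externally entangled, ``no intermediate measurement is allowed'' and all classical scaffolding must be uncomputed. But Algorithm~QLE explicitly measures (Step~4), and this is harmless: leader election, tree construction, identifier assignment, and graph recognition all act on \emph{fresh} ancilla registers disjoint from the registers holding $\xi$, so their measurements commute with the identity on $\xi$ and cannot disturb it or its external correlations. The resulting classical data (who is leader, the tree, the identifiers, the adjacency matrix) can simply be left in memory; the corollary only asks that the designated registers hold $\rho$, not that all ancillas return to a fixed state. Your fully-coherent-then-uncompute variant would also work, but it is more delicate than you indicate (the gather--compute--scatter step would run in superposition over different leader/identifier assignments, and you would need to argue that the scattered $\rho$ is independent of the branch so the uncomputation succeeds), and it is unnecessary.
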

This gives the first quantum algorithm that exactly computes any
computable Boolean function with round complexity $O(n)$ and with
smaller bit complexity than that of existing classical 
algorithms~\cite{YamKam96IEEETPDS-1,KraKriBer94InfoComp,TanKobMat05STACS}
in
the worst case over all (computable) Boolean functions and network
topologies.

\paragraph{GHZ-State Sharing} 
From the viewpoint of quantum information,
our leader election algorithm exactly solves the problem of sharing an $n$-partite $W$-state
(e.g., a state $(\ket{100}\+\ket{010}\+\ket{001})/\sqrt{3}$ for the three-party case).
As described above, this essentially solves the more general problem
of sharing an $n$-qubit state $\rho$.
We are then interested in whether a certain non-trivial $\rho$ can be shared with less computational resources
than a $W$-state. Specifically, we focus on the number of distinct quantum gates required to share $\rho$, 
since, for the leader election problem, all known exact algorithms (including ours) require quantum gates
that depend on the number $n$ of parties. 

Among non-trivial quantum states other than $W$-states, 
an $n$-partite GHZ state would be one of the most interesting quantum states,
since it would be a useful resource for
quantum computing and communication.
We give exact quantum algorithms that solve, with a
constant-sized gate set, the problem of sharing an $n$-partite GHZ state (\emph{or} an $n$-partite cat
state) $(\ket{0}^{\otimes n}\+\ket{1}^{\otimes n})/\sqrt{2}$ with qubits, and the problem
of sharing an $n$-partite generalized-GHZ state $(\ket{0}^{\otimes n}\+\cdots \+\ket{k-1}^{\otimes n})/\sqrt{k}$ 
with $k$-level qudits for a constant integer $k\ge 2$, among $n$
parties on an anonymous quantum network.  
 We call this problem the
 \emph{GHZ-state sharing problem}.
Notice that $k$-level qudits are physically realizable \cite{NieChu00Book} and are just qubits for $k=2$.
Let $F_k$ be a function such that $F_k(x_1,\ldots, x_n)=\sum _{i=1}^{n} x_i \pmod{k}$ for distributed inputs $x_i\in \set{0,\dots ,k-1}$.
Let $\calF_k$ be any quantum algorithm that exactly 
computes $F_k$ 
without intermediate measurements
on an anonymous network,
and
let $Q^{\operatorname{rnd}}(\calF_k)$ and $Q^{\operatorname{bit}}(\calF_k)$ be the worst-case round and bit complexities, respectively, 
of 
$\calF_k$ 
over all possible quantum states as input.
\begin{theorem}
\label{th:QuantumConsensus}
If every party is given the number $n$ of party and an integer $k\ge 2$,
the GHZ-state sharing problem
can exactly be solved on an anonymous quantum network
in 
$O(Q^{\operatorname{rnd}}(\calF_k))$ 
rounds with bit complexity 
$O(Q^{\operatorname{bit}}(\calF_k))$.
Moreover, every party uses only a constant-sized gate set 
to perform all operations for any integer constant $k\ge 2$, if an algorithm $\calF_k$ is given as a black box.
\end{theorem}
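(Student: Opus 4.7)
The plan is to give an explicit algorithm with three phases: (i) each party prepares a local uniform $k$-dimensional superposition; (ii) a single invocation of $\calF_k$ writes the total sum $\sum_i x_i \bmod k$ into local ancillae; (iii) local inverse Fourier transforms followed by a second call to $\calF_k$ disentangle the ancillae, leaving the parties in the target GHZ state. All operations performed outside of $\calF_k$ itself will be drawn from a finite gate set whose size depends only on $k$, not on $n$.

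Phases~(i) and~(ii) are routine: applying the $k$-dimensional QFT to $\ket{0}$ at every party yields the joint state $\frac{1}{\sqrt{k^n}}\sum_{\vec{x}}\ket{\vec{x}}_A$, and one coherent call to $\calF_k$ writes $s(\vec{x}) = \sum_i x_i \bmod k$ into each party's ancilla $B_i$, producing $\frac{1}{\sqrt{k^n}}\sum_{\vec{x}}\ket{\vec{x}}_A\ket{s(\vec{x})}^{\otimes n}_B$.

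Phase~(iii) is the subtle part. After each party applies the inverse $k$-dimensional QFT to its $A_i$, a direct Fourier calculation---using $\sum_{\vec{x}:\sum_i x_i = t}\omega^{-\vec{x}\cdot\vec{y}} = k^{n-1}\omega^{-yt}$ when $\vec{y}=(y,\dots,y)$ and $0$ otherwise, with $\omega = e^{2\pi i/k}$---shows that the joint state becomes $\frac{1}{k}\sum_{y,t=0}^{k-1}\omega^{-yt}\ket{y}^{\otimes n}_A\ket{t}^{\otimes n}_B$. The $A$ register is now supported entirely on the ``all-equal'' subspace $\{\ket{y}^{\otimes n}\}_y$---already the GHZ support---but $A$ is still entangled with $B$ through the phase $\omega^{-yt}$. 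To finish, I apply the $k$-dimensional QFT to each $B_i$ and then invoke $\calF_k$ a second time, exploiting the fact that the $A$ register now carries precisely the sum that $\calF_k$ would output on the Fourier-dual of $B$; this cancels the phases coherently and resets $B$ to $\ket{0}^{\otimes n}$, leaving $A$ in the target state $\frac{1}{\sqrt{k}}\sum_y\ket{y}^{\otimes n}$.

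The main obstacle is carrying out this disentanglement using only a constant-size gate set. A naive per-party phase correction that spreads $\omega^{yt}$ evenly across the $n$ parties would require $n^{-1}\bmod k$, which fails whenever $\gcd(n,k)>1$, and any alternative per-party phase gate of order $n$ would depend on $n$ and thus violate the constant-size-gate-set requirement. The resolution is to push the phase cancellation into the collective second $\calF_k$ invocation in the Fourier-dual basis of $B$, so that the oracle itself performs the global correction with no $n$-dependent local gate appearing. Since every other local operation ($\ket{0}$ preparation, the two fixed QFTs of dimension $k$, and CNOT-like shifts modulo $k$) depends only on the constant $k$, the entire local gate set has size $O_k(1)$, and the round and bit complexities are dominated by the $O(1)$ calls to $\calF_k$, yielding the claimed $O(Q^{\operatorname{rnd}}(\calF_k))$ and $O(Q^{\operatorname{bit}}(\calF_k))$ bounds.
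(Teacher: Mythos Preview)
Your Phases~(i)--(ii) and the inverse-Fourier step of Phase~(iii) are fine; the state
\[
\frac{1}{k}\sum_{y,t=0}^{k-1}\omega^{-yt}\,\ket{y}^{\otimes n}_A\,\ket{t}^{\otimes n}_B
\]
is correct. The gap is in the disentanglement step. After you apply $W_k$ to each $B_i$, the state becomes
\[
\frac{1}{\sqrt{k^n}}\sum_{y}\ket{y}^{\otimes n}_A\sum_{\vec b:\,\sum_i b_i\equiv y}\ket{\vec b}_B,
\]
which is \emph{exactly} the Phase~(ii) state with the roles of $A$ and $B$ swapped. So the combination ``inverse QFT on $A$, then QFT on $B$'' just permutes the two registers; it does not reduce the entanglement. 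Now look at what a second $\calF_k$ call can do. If you run $\calF_k$ on input $B$ with a fresh output register, that register receives $\ket{y}^{\otimes n}$, which is redundant with $A$; uncomputing it locally against $A$ simply undoes the call. If instead you use $A$ itself as the output (in add/subtract mode), then since $F_k(\vec b)=y$ you send $A$ to $\ket{0}^{\otimes n}$ and destroy the very register you wanted to keep. Either way $B$ is untouched, so the claim ``resets $B$ to $\ket{0}^{\otimes n}$, leaving $A$ in the target state'' is false. More abstractly: the reduced state of $A$ is $\frac{1}{k}\sum_y\ket{y}\!\bra{y}^{\otimes n}$, of rank $k$; no call to $\calF_k$ on $B$ (with any output register) changes that rank, so $A$ cannot end up pure.

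The paper's proof avoids this obstacle by \emph{measuring} rather than trying to disentangle coherently. It runs $k$ parallel copies of your Phases~(i)--(ii), measures each ancilla to collapse the $i$th copy to $\cat{k}{-s_i\bmod k}$, and if no $s_i=0$ then by pigeonhole two of the $s_i\in\{1,\dots,k-1\}$ coincide; a purely local add-and-measure step on those two copies then yields the GHZ state up to a global phase. All of this uses $O(1)$ calls to $\calF_k$ and only $k$-dependent local gates, giving the stated bounds. You correctly identified the anonymity obstruction to a ``divide the phase by $n$'' fix, but your Fourier-dual workaround does not actually break the $A$--$B$ entanglement; you need either the measurement-plus-pigeonhole route or some genuinely new idea.
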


For every integer constant $k\ge 2$, there is an algorithm that exactly 
and reversibly 
computes $F_k$  
for any possible quantum state as input
in $O(n)$ rounds with bit
complexity $O(mn^4 \log n)$ on
an anonymous classical/quantum network of any unknown topology
\cite{TanKobMat05STACS}.  Therefore, the theorem implies that there
exists a quantum algorithm that exactly solves the GHZ-state sharing problem
with a constant-sized gate set for any constant $k\ge 2$.  For
$k=2$, we have the following corollary.
\begin{corollary}
The GHZ-state sharing problem with $k = 2$
can exactly be solved
on an anonymous quantum network
for any number $n$ of parties
with a gate set  that can perfectly implement the Hadamard transformation
and any classical reversible transformations.
In particular, the problem can exactly be solved
with either the Shor basis or the gate set
consisting of the Hadamard gate, the CNOT gate, and the Toffoli gate.
\end{corollary}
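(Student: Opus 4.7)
The plan is to invoke Theorem~\ref{th:QuantumConsensus} with $k=2$ and inspect which operations its construction actually uses. The theorem's constant-sized gate set arises from two sources: the black-box subroutine $\calF_k$ computing $F_k(x_1,\ldots,x_n) = \sum_i x_i \pmod k$, and the fixed wrapper of quantum operations that turns this subroutine into a GHZ-state sharing protocol. For $k = 2$ the function $F_2$ is just the parity of $n$ bits, and the algorithm of Ref.~\cite{TanKobMat05STACS} specialised to $k=2$ may be implemented entirely by classical reversible operations (XOR accumulations forwarded along the network), so the subroutine contributes no non-classical gates.

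It then remains to verify that the fixed wrapper of Theorem~\ref{th:QuantumConsensus} requires nothing more than Hadamard and classical reversible operations when $k = 2$. The construction prepares a uniform superposition over $\set{0,\dots,k-1}$ by a quantum Fourier transform over $\mathbb{Z}_k$, runs $\calF_k$ reversibly on this superposition together with each party's local register, and finally applies the inverse Fourier transform to complete the GHZ state. For $k=2$, however, the Fourier transform over $\mathbb{Z}_2$ coincides exactly with the Hadamard transformation, while the remaining steps -- initialisation of computational basis states, copying, uncomputation, and modular addition -- act as permutations of the computational basis and hence belong to the classical reversible fragment.

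Combining these two observations yields the first sentence of the corollary. For the second sentence, both the Shor basis and $\set{H,\,\text{CNOT},\,\text{Toffoli}}$ contain the Hadamard gate, and by the classical universality of $\set{\text{CNOT},\,\text{Toffoli}}$ (with the NOT gate realised by a Toffoli conditioned on two fixed $\ket{1}$ ancillas) each of these sets can perfectly implement every classical reversible transformation on sufficiently many ancilla qubits. Thus each set satisfies the hypothesis of the first sentence.

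The main obstacle I anticipate is verifying rigorously that no relative phase slips into the $k=2$ wrapper, since Theorem~\ref{th:QuantumConsensus} is proved for general $k$ where the Fourier transform over $\mathbb{Z}_k$ contains genuine non-Clifford phase rotations. Concretely, I would re-inspect the proof of Theorem~\ref{th:QuantumConsensus} and check that every occurrence of $\text{QFT}_{\mathbb{Z}_k}$ and its inverse collapses to a single Hadamard when $k=2$, and that no auxiliary controlled-phase gate is required in the glue logic between the Fourier steps and the invocation of $\calF_2$. Once this bookkeeping is done, the corollary follows by direct substitution.
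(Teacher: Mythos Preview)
Your proposal is correct and follows essentially the same reasoning as the paper: the corollary is not given a separate proof but is immediate from the explicit $k=2$ construction in Section~\ref{sec:ProofOfConsensus}, which uses only Hadamard, CNOT, local parity, measurement, and a reversible (classical) parity-computing subroutine. Your worry about stray phase gates is unnecessary, since the paper presents the $k=2$ case directly (Section~7.1) rather than by specialising the general-$k$ wrapper, so no inspection of $\mathrm{QFT}_{\mathbb{Z}_k}$ for $k>2$ is needed.
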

If much more rounds are allowed, there exists a more bit-efficient algorithm
that exactly solves the GHZ-state sharing problem in $O(n^2)$ rounds with bit complexity $O(mn^2)$ by using only a constant-sized gate set for any $n$. The algorithm is obtained
by modifying Algorithm I in Ref.~\cite{TanKobMat05STACS}.

\subsection{Related Work}
Refs.~\cite{PalSinKum03ARXIV,DHoPan06QIC} have dealt with
the leader election and GHZ-state sharing problems
in a different setting where pre-shared entanglement is assumed
but only classical communication is allowed.
The relation between several network models that differ in available quantum resources
is discussed in Ref.~\cite{GavKosMar09ARXIV}.
\subsection{Organization}
Section~2 describes the network model, and some tools and notations used in the paper.
Sections~3 and 4 prove Theorems~\ref{th:LE} and \ref{th:H1}. Section ~5 then gives a quantum leader election algorithm
as a corollary of the theorems. Section~6 considers the problems of computing Boolean functions and sharing a quantum state.
Section~7 presents a quantum algorithm for the GHZ-state sharing problem.


\section{Preliminaries}

\subsection{\bf Distributed Computing}
\label{subsec:model}
\paragraph{The Network Model:}
A classical \emph{network}
is composed of
multiple parties
and bidirectional classical communication links connecting parties.
In a quantum network,
every party can perform quantum computation and communication,
and each adjacent pair of parties
has a bidirectional quantum communication link between them
(we do not assume any prior shared entanglement).
When the
parties and links are regarded as nodes and edges, respectively,
the topology of the network is expressed by a connected undirected
graph. 
We denote by $\calG_n$ the set of all $n$-node connected undirected graphs 
with no multiple edges and no self-loops.
In what follows,
we may identify each party/link with its corresponding node/edge
in the underlying graph for the system,
provided that doing so is not confusing.
Every party has \emph{ports} 
corresponding one-to-one to communication links incident to the party.
Every port of party $l$ has a unique label $i$, $(1 \leq i \leq d_l)$,
where $d_l$ is the number of parties adjacent to $l$.
More formally,
the underlying graph ${G=(V,E)}$ has a \emph{port numbering} \cite{YamKam96IEEETPDS-1},
which is a set $\sigma$ of functions $\{ \sigma[v] \colon v\in V\}$
such that, for each node $v$ of degree $d_v$,
$\sigma[v]$ is a bijection from the set of edges incident to $v$
to $\{ 1, 2, \ldots, d_v \}$.
It is stressed that
each function $\sigma[v]$ may be defined independently of any other $\sigma [v']$.
In our model,
each party knows the number of his ports and
the party can appropriately choose one of his ports
whenever he transmits or receives a message.

Initially, every party $l$ has local information $I_l$,
the information that only party $l$ knows,
such as his local state and the number of his adjacent parties,
and global information $I_G$,
the information shared by all parties (if it exists),
such as the number of parties in the system
(there may be some information shared by not all parties, but 
it is not necessary to consider such a situation
when defining anonymous networks).
Every party $l$ runs the same algorithm,
which is given local and global informations, $I_l$ and $I_G$,  as its arguments.
If all parties have the same local information except
for the number of ports they have,
the system and the parties in the system are said to be \emph{anonymous}.
For instance, if the underlying graph of an anonymous network is regular,
this is essentially equivalent
to the situation in which every party has the same identifier
(since we can regard the local information $I_l$ of each party $l$ as his identifier).
This paper deals with only anonymous networks, but
may refer to a party with its index (e.g., party $i$)
only for the purpose of simple description.

A network is either
\emph{synchronous} or \emph{asynchronous}. In the synchronous case, message passing is performed synchronously.
The unit interval of synchronization is called a \textit{round}.
Following the approach in Ref.~\cite{Lyn96Book},
one round consists of
the following two sequential steps, where 
we assume that  two (probabilistic) procedures
that generate messages and change local states are defined in the algorithm
invoked by each party:
(1) each party changes his local state according to a procedure
that takes his current local state and the incoming messages as input, and then removes the
  messages from his ports;
(2) each party then prepares messages and decides the ports
through which the messages should be sent
by using the other 
procedure that takes his current local state as input,
and finally sends the messages via the ports.
Notice that, in the quantum setting, the 
two procedures 
are physically realizable operators.
A network that is not synchronous is asynchronous.
In asynchronous networks, the number of rounds required by an algorithm
is defined by convention as the length of the longest chains of messages
sent during the execution of the algorithm.

This paper focuses on the required number of rounds as a complexity
measure (called \emph{round complexity}). This is often used as an
approximation of time complexity, which includes the time taken by
local operations as well as that taken by message exchanges.
Another complexity measure we use is
bit complexity, which is the number of bits, including qubits,  communicated over all communication links. 
In this paper, we do not assume any faulty party and communication link.

\subsection{Leader Election Problem in Anonymous Networks}
The leader election problem is formally defined as follows.
\begin{definition}[\boldmath{$n$-party leader election problem ($\LE_n$)}]
Suppose that there is an $n$-party network whose
underlying graph is in $\calG_n$, and that each party
$i\in \{ 1,2,\ldots, n\}$ in the
network has a variable
$y_i$ initialized to $1$.
Create the situation in which ${y_k=1}$ for a certain
$k\in \{ 1,2,\ldots, n\}$
and ${y_i=0}$ for every $i$ in the rest 
$\{ 1,2,\ldots, n\}\setminus \{k\}$.
\end{definition}
This paper considers $\LE_n$ on an anonymous network
(when each party $i$ has his own unique identifier, i.e.,  
$I_i\neq I_j$ for all distinct $i,j\in \set{1,\dots ,n}$,
$\LE_n$ can 
deterministically be solved in  $\Theta(n)$ rounds in both synchronous and 
asynchronous cases~\cite{Lyn96Book}).

The leader election problem on an anonymous network was first 
investigated by
Angluin~\cite{Ang80STOC}. 
Subsequently,
Yamashita and
Kameda~\cite{YamKam96IEEETPDS-1} 
gave a necessary and sufficient condition on network topologies
under which $\LE_n$ can exactly be solved for given $n$.
Their result implies that $\LE_n$ cannot exactly be solved 
for a broad class of graphs, including
rings, complete graphs, and certain families of regular graphs. 
Interested readers should consult Refs.~\cite{afek-matias94,yamashita-kameda99}
and the references in them for detailed information about the leader election problem on anonymous networks.

\subsection{Quantum Computing}
We assume that readers have some basic knowledge of quantum computing
introduced in standard textbooks \cite{NieChu00Book,KitSheVya02Book}.
The following well-known theorem is called ``exact quantum amplitude amplification'',  which will be used repeatedly.
\begin{theorem}[\protect\cite{BraHoyMosTap02AMS,chi-kim98}]
\label{th: quantum amplitude amplification}
  Let $\calA$ be any quantum algorithm that searches
for $z\in\set{0,1}^n$ such that $\chi(z)=\true$
without using measurements, where
$\chi(z)\in \{\true,\false\}$ is any Boolean function.
Suppose that
\[
\ket{\Psi}=\calA\ket{0}^{\otimes n}=\sum _{z}\alpha_z \ket{z}
\]
for orthonormal basis $\set{\ket{z}}_{
z \in \{0,1\}^n
}$.
Let $Q(\calA,\chi,\phi,\theta)$
be an operator 
\[-\calA F _{0}(\phi)\calA^{-1}F _{\chi}(\theta), \] 
where
$F _{\chi}(\theta)$ multiplies $\ket{z}$ by a factor of 
$e^{i\theta}$ if $\chi(z)=\true$, and
$F _{0}(\phi)$ multiplies $\ket{z}$ by a factor of 
$e^{i\phi}$ if $z=0\cdots 0$.

If
  the initial success probability $a=\sum_{z\colon \chi(z)=\true}\abs{\alpha_z}^2$
of $\calA$ is exactly known and at least $1/4$, then
$$
Q(\calA,\chi,\phi_a,\theta_a)\ket{\Psi}
=\frac{1}{\sqrt{a}}\sum _{z\colon \chi(z)=\true}\alpha_z \ket{z}
$$
for some values
$\phi_a$ and $\theta_a$ $(0\leq \phi_a,\theta_a\leq 2\pi)$
computable from $a$.
 \end{theorem}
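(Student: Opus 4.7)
The plan is to reduce the action of $Q=Q(\calA,\chi,\phi,\theta)$ to a single two-dimensional invariant subspace and then choose the phase parameters so that, in that subspace, $Q$ maps $\ket{\Psi}$ exactly onto the normalized ``good'' component. Concretely, I would first introduce the normalized states
\[
\ket{G}=\frac{1}{\sqrt{a}}\sum_{z:\chi(z)=\true}\alpha_z\ket{z},
\qquad
\ket{B}=\frac{1}{\sqrt{1-a}}\sum_{z:\chi(z)=\false}\alpha_z\ket{z},
\]
so that $\ket{\Psi}=\sqrt{a}\,\ket{G}+\sqrt{1-a}\,\ket{B}$. In the two-dimensional subspace $\calS=\mathrm{span}\{\ket{G},\ket{B}\}$, the selective phase shift $F_\chi(\theta)$ acts as $\mathrm{diag}(e^{i\theta},1)$, and the operator $\calA F_0(\phi)\calA^{-1}=I+(e^{i\phi}-1)\,\ket{\Psi}\!\bra{\Psi}$ preserves $\calS$ because $\ket{\Psi}\in\calS$. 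Hence $Q$ preserves $\calS$, and the whole problem reduces to analyzing a single $2\times 2$ unitary.

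Next I would write out this $2\times 2$ matrix in the $\{\ket{G},\ket{B}\}$ basis. Multiplying
\[
Q=-\bigl(I+(e^{i\phi}-1)\ket{\Psi}\!\bra{\Psi}\bigr)F_\chi(\theta)
\]
gives a matrix whose entries are explicit functions of $a$, $\theta$, and $\phi$. The condition that $Q\ket{\Psi}$ be proportional to $\ket{G}$ is a single complex equation (the $\ket{B}$-component must vanish), and normalization forces the proportionality constant to be a unit phase. Setting $\sin^2\omega=a$ with $\omega\in[0,\pi/2]$, this equation becomes, after some rearrangement, a trigonometric identity in $\theta,\phi,\omega$ that can be solved explicitly for $\phi_a$ and $\theta_a$ as continuous functions of $a$ alone.

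The point where the hypothesis $a\ge 1/4$ enters is the solvability of that trigonometric equation. Geometrically, $Q$ with real phases $\phi=\theta=\pi$ is a rotation in $\calS$ by angle $2\omega$, which for general $a$ overshoots or undershoots the target angle $\pi/2-\omega$ needed to carry $\ket{\Psi}$ to $\ket{G}$ in one step. Introducing the extra freedom in $\phi_a,\theta_a$ lets one tune the rotation angle continuously, but a single application of $Q$ can only implement rotation angles up to roughly $3\omega$; the requirement $3\omega\ge \pi/2-\omega$, i.e.\ $\omega\ge\pi/8$, is slightly weaker than the stated $a\ge 1/4$ (which gives $\omega\ge\pi/6$) and so guarantees a solution. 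I would finish by exhibiting the closed-form expressions for $\phi_a,\theta_a$ derived from the above equation, verifying by direct computation that $Q(\calA,\chi,\phi_a,\theta_a)\ket{\Psi}$ has zero $\ket{B}$-component, and concluding that the output equals $\ket{G}$ up to an irrelevant global phase.

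The main obstacle is purely computational: carrying out the $2\times 2$ matrix multiplication cleanly enough that the vanishing-of-$\bra{B}Q\ket{\Psi}$ condition becomes a single tractable trigonometric equation, and then extracting $\phi_a,\theta_a$ in closed form while tracking the $1/4$ threshold. Once that algebraic step is done, the rest is bookkeeping and the invariance of $\calS$ does all the conceptual work.
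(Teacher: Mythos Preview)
The paper does not prove this theorem; it is quoted as a known result from Brassard--H{\o}yer--Mosca--Tapp and Chi--Kim and used as a black box throughout. Your outline is exactly the standard argument from those references: pass to the two-dimensional subspace spanned by the normalized good and bad components of $\ket{\Psi}$, observe that both $F_\chi(\theta)$ and $\calA F_0(\phi)\calA^{-1}=I+(e^{i\phi}-1)\ket{\Psi}\!\bra{\Psi}$ preserve it, write $Q$ as an explicit $2\times 2$ unitary, and solve for the phases that kill the $\ket{B}$-component.

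One correction to your threshold discussion. With $\sin^2\omega=a$, the initial state sits at angle $\omega$ from the bad axis, and a single generalized Grover step can realize any rotation angle in $[0,2\omega]$ (the extra phase freedom lets you \emph{slow} the rotation, not speed it up). Hence the final angle after one step is at most $\omega+2\omega=3\omega$, and reaching $\pi/2$ requires $3\omega\ge\pi/2$, i.e.\ $\omega\ge\pi/6$, which is precisely $a\ge 1/4$. Your inequality $3\omega\ge\pi/2-\omega$ double-counts the starting angle and yields the spurious threshold $\omega\ge\pi/8$; the bound $a\ge 1/4$ in the statement is in fact tight for a single iteration, not merely sufficient. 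Apart from this bookkeeping slip, the plan is sound and matches the cited proofs.
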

\subsection{Notations}
A Boolean function $f\colon \set{0,1}^n\rightarrow \set{\true,\false}$
depending on $n$ variables, 
$x_1,\dots ,x_n$
with $x_i\in \set{0,1}$,
is said to be \emph{symmetric} if $f$ is determined by
the Hamming weight of $\vec{x}=(x_1,\dots ,x_n)$, i.e., $\abs{\vec{x}}=\sum _{i=1}^n x_i$.
In particular, symmetric function 
$H_k\colon \set{0,1}^n\rightarrow \set{\true,\false}$ 
is defined as 
$H_k(\vec{x})=\true$
if and only if 
$\abs{\vec{x}}$ is $k$.
We say that $n$ parties \emph{exactly compute a Boolean function} $f\colon \set{0,1}^n\rightarrow \set{\true,\false}$
if every party $i$ 
has variables $y_i$ (initialized to ``$\true$'')
and $x_i\in \set{0,1}$ before computation, and set
$y_i$ to $f(\vec{x})$ with certainty after computation.
If a quantum algorithm 
exactly 
computes $f$ 
without intermediate measurements
on an anonymous quantum network, we say that the algorithm is an \emph{$f$-algorithm}.

In general, an $f$-algorithm
transforms (with
ancilla qubits) an input state $\left[ \bigotimes_{i=1}^n
  (\ket{x_i}\ket{\true}) \right] \otimes \ket{0}$ into $\left[
  \bigotimes_{i=1}^n (\ket{x_i}\ket{f(\vec{x})}) \right] \otimes
\ket{g_{\vec{x}}}$, for any $\vec{x} = (x_1, \ldots, x_n) \in
\{0,1\}^n$, where $\ket{g_{\vec{x}}}$ is ``garbage'' left after
computing $f(\vec{x})$. For the algorithms over networks with
bidirectional communication links, any $f$-algorithms
are reversible. Hence we can totally remove the ``garbage'' by
standard garbage-erasing technique, as the $f$-algorithm exactly and reversibly
computes $f$. Putting everything together, we may assume without loss
of generality (at the cost of doubling each complexity) that any
$f$-algorithm transforms an input state
\[
\sum_{\vec{x} \in \{0,1\}^n}
 \alpha_{\vec{x}}
 \bigotimes_{i=1}^n
  (\ket{x_i}\ket{\true})
\]
into
\[
\sum_{\vec{x} \in \{0,1\}^n}
 \alpha_{\vec{x}}
 \bigotimes_{i=1}^n
  (\ket{x_i}\ket{f(\vec{x})}),
\]
for any $\alpha_{\vec{x}} \in \Complex$ with $\sum_{\vec{x} \in \{0,1\}^n} \abs{\alpha_{\vec{x}}}^2 =1$, where $\vec{x} = (x_1, \ldots, x_n)$.
Similarly, for the more general function $f\colon X^n\to Y$ depending on distributed $n$ variables
$(x_1,\dots ,x_n)$ with $x_i\in X$,
we say that a quantum algorithm is an $f$-algorithm,
if the algorithm 
exactly 
computes $f$ 
without intermediate measurements 
on an anonymous quantum network.
For an $f$-algorithm $\calF$ on an anonymous quantum network with the underlying graph $G \in \calG_n$,
we denote by 
 $Q_G^{\operatorname{bit}}(\calF)$ and 
 $Q_G^{\operatorname{rnd}}(\calF)$ 
the worst-case bit and round complexities, respectively, 
of 
$\calF$
over all possible quantum states given as input.
For simplicity, we may write 
$Q^{\operatorname{bit}}(\calF)$ and 
$Q^{\operatorname{rnd}}(\calF)$
if $G$ is clear from context.


\section{Proof of Theorem~\ref{th:LE}}
\label{subsec:ProofOfLE}
\subsection{Basic Idea}
 Initially, every party is eligible to be the leader
and is given the number $n$ of parties as input.
 Every party flips a coin that gives heads with probability $1/n$
 and
tails with $1-1/n$. 
If exactly one party sees heads, the party becomes  a unique leader.
The probability of this successful case is given by
\[
s(n)
=
  {n \choose 1}
  \cdot
  \frac{1}{n}
  \cdot
  \left( \frac{n-1}{n} \right)^{n-1}
=
\left(
  1 - \frac{1}{n}
\right)^{n-1}
>
\frac{1}{e}
>
\frac{1}{4}.
\]
We shall amplify the probability of this case to one
by applying the exact quantum amplitude amplification 
in Theorem~\ref{th: quantum amplitude amplification}.
To do this, we use an $H_1$-algorithm  in a black-box manner
to check (in $F_{\chi}(\theta_{s(n)})$) whether or not a run of the above randomized algorithm results in 
the successful case,
and use an $H_0$-algorithm in a black-box manner to realize the diffusion operator
(more strictly, $F_0(\phi_{s(n)})$).
In other words, we shall quantumly reduce the leader election problem to computing $H_0$ and $H_1$.
In our algorithm, all communication  is performed 
for computing  $H_0$, $H_1$ and their inversions. 
The non-trivial part is how
to implement $F_{\chi}(\theta_{s(n)})$ and $F_0(\phi_{s(n)})$ in a distributed way
on an anonymous network, where $s(n)=(1 - 1/n)^{n-1}$, since every party must run
the same algorithm.

\subsection{The Algorithm}
Before describing the algorithm,
we introduce the concept of
\emph{solving} and \emph{unsolving} strings.
Suppose that each party $i$ has a bit $x_i$,
i.e.,
the $n$ parties share $n$-bit string ${\vec{x} = (x_1, x_2 ,\ldots, x_n)}$.
A string $\vec{x}$ is said to be \emph{solving}
if $\vec{x}$ has Hamming weight one.
Otherwise, $\vec{x}$ is said to be \emph{unsolving}.
We also say that an $n$-qubit pure state
${\ket{\psi} = \sum_{\vec{x} \in \{0, 1\}^n} \alpha_{\vec{x}} \ket{\vec{x}}}$
shared by the $n$ parties is \emph{solving (unsolving)}
if ${\alpha_{\vec{x}} \neq 0}$ only for $\vec{x}$ that is solving (unsolving).

Fix an $H_0$-algorithm and an $H_1$-algorithm, which we are allowed to use in a black-box manner.

\paragraph{Base algorithm $\calA$:} Let $\bfA$ be the two-by-two unitary matrix defined by
\[
\bfA
=
\frac{1}{\sqrt{n}}
\begin{pmatrix}
\sqrt{n-1} & 1\\
1 & - \sqrt{n-1}
\end{pmatrix}.
\]
At the beginning of the algorithm,
each party prepares three single-qubit quantum registers
$\reg{R}$, $\reg{S}$, and $\reg{S}^\prime$,
where the qubit in 
$\reg{R}$ is initialized to $\ket{0}$,
the qubits in $\reg{S}$ and $\reg{S}^\prime$ are initialized to $\ket{\mbox{``$\true$''}}$
(the qubits in $\reg{S}$ and $\reg{S}^\prime$ will be used as ancillary qubits when performing phase-shift operations
on the qubit in $\reg{R}$).
First, each party applies $\bfA$ to the qubit in $\reg{R}$
to generate the quantum state
${
\ket{\psi}
=\bfA \ket{0}=
\sqrt{1-\frac{1}{n}}\,\ket{0} + \sqrt{\frac{1}{n}}\,\ket{1}
}$.
Equivalently,
all $n$ parties share the $n$-qubit quantum state
\[
{
\ket{\Psi}
=
\ket{\psi}^{\otimes n}
=
\left(
\sqrt{1-\frac{1}{n}}\,\ket{0} + \sqrt{\frac{1}{n}}\,\ket{1}
\right)^{\otimes n}
}
\]
in their $\reg{R}$'s.
Let ${S_n = \set{\vec{x} \in \{0,1\}^n\colon \text{$\vec{x}$ is solving}}}$
be the set of solving strings of length $n$,
and let
${
\ket{\Psi_{\solving}}
=
\frac{1}{\sqrt{n}} \sum_{\vec{x} \in S_n} \ket{\vec{x}}
}$
be the quantum state which is the uniform superposition of
solving strings of length $n$.
Notice that
$\ket{\Psi}$ is a superposition of
the solving state $\ket{\Psi_{\solving}}$
and some unsolving state $\ket{\Psi_{\unsolving}}$:
\[
\ket{\Psi}
=\alpha_{\solving}\ket{\Psi_{\solving}}+ \alpha_{\unsolving}\ket{\Psi_{\unsolving}}.
\]
The amplitude $\alpha_{\solving}$ of $\ket{\Psi_{\solving}}$
is given by $\alpha_{\solving}=\sqrt{s(n)}> 1/2$.

\paragraph{Exact amplitude amplification:} Now the task for the $n$ parties is
to amplify the amplitude of $\ket{\Psi_{\solving}}$
to one via exact amplitude amplification,
which involves 
one run of $-\calA F _{0}(\phi_a)\calA^{-1}F _{\chi}(\theta_a)$ for $\calA=\bfA^{\otimes n}$
since the initial success probability is 
${\alpha_{\solving}^2 > {1}/{4}}$.

To realize $F_{\chi}(\theta_{s(n)})$ in a distributed manner,
where $\chi(\vec{x})=1$ if $\vec{x}$ is solving
and $\chi(\vec{x})=0$ otherwise, 
each party wants to
multiply the amplitude of any basis state $\ket{\vec{x}}$ for $\chi(\vec{x})=1$
by a factor of
$e^{i\frac{1}{n}\theta_{s(n)}}$, where $s(n)=(1 - 1/n)^{n-1}$.  
This will multiply the amplitude of the basis state by 
a factor of $e^{i\theta_{s(n)}}$ as a whole.
At this point, however, no party can check 
if $\chi(\vec{x})=1$ for each basis state $\ket{\vec{x}}$, 
since he knows only the content of his $\reg{R}$.
Thus, every party runs 
the $H_1$-algorithm
with $\reg{R}$ and 
$\reg{S}$,
which sets 
the content of $\reg{S}$ to ``$\true$'' 
if the number of $1$'s among the contents of $\reg{R}$'s of all parties
is exactly one and sets it to ``$\false$'' otherwise
(recall that the $H_1$-algorithm computes $H_1$ for each basis state in
 a superposition). This operation transforms the state as follows:
\[
\ket{\Psi}\ket{\mbox{``$\true$''}}^{\otimes n}
\mapsto \alpha_{\solving}\ket{\Psi_{\solving}}\ket{\mbox{``$\true$''}}^{\otimes n}
+ \alpha_{\unsolving}\ket{\Psi_{\unsolving}}\ket{\mbox{``$\false$''}}^{\otimes n},
\]
where the last $n$ qubits are those in $\reg{S}$'s.
Every
party then multiplies the amplitude of each basis state by a factor of
$e^{i\frac{1}{n}\theta_{s(n)}}$, if the content of $\reg{S}$ is ``$\true$''
(here, no party measures $\reg{S}$; every party
just performs the phase-shift operator controlled by the qubit in $\reg{S}$).
Namely, the state over $\reg{R}$'s and $\reg{S}$'s of all parties is transformed into
\[
(e^{i\frac{1}{n}\theta_{s(n)}})^n\alpha_{\solving}\ket{\Psi_{\solving}}\ket{\mbox{``$\true$''}}^{\otimes n}
+ \alpha_{\unsolving}\ket{\Psi_{\unsolving}}\ket{\mbox{``$\false$''}}^{\otimes n}.
\]
Finally,
every party inverts every computation and communication
of the $H_1$-algorithm to
disentangle $\reg{S}$.

The implementation of $F_{0}(\phi_{s(n)})$ is
similar to that of $F_{\chi}(\theta_{s(n)})$,
except that 
$F_{0}(\phi_{s(n)})$ multiplies
the all-zero basis state $\ket{0\cdots 0}$ by $e^{i\phi_{s(n)}}$.
First, every party runs
the $H_0$-algorithm
with $\reg{R}_0$ and $\reg{S}^\prime$,
which 
sets the content of $\reg{S}^\prime$ to ``$\true$'' in the case of the all-zero state,
and sets it to ``$\false$'' otherwise.
Next, every
party multiplies the amplitude of the all-zero state by a factor of
$e^{i\frac{1}{n}\phi_{s(n)}}$, if the content of $\reg{S}^\prime$ is ``$\true$''.  
Finally,
every party inverts every computation and communication
of the $H_0$-algorithm to
disentangle $\reg{S}^\prime$.

More precisely, every party sets his classical variable $\status$ to
$\mbox{``$\eligible$''}$, and runs Algorithm~QLE with $\status$ and $n$,
given in Figure~\ref{fig: Algorithm QLE}. 
After the execution of the algorithm, exactly one party has the value $\mbox{``$\eligible$''}$ in $\status$.
Since all communication  is performed to compute $H_0$ and $H_1$ and their inversions,
the algorithm runs in $2(Q^{\operatorname{rnd}}_G(\calH_0)+Q^{\operatorname{rnd}}_G(\calH_1))$ rounds
with bit complexity $2(Q^{\operatorname{bit}}_G(\calH_0)+Q^{\operatorname{bit}}_G(\calH_1))$
for any graph $G\in \calG_n$, where $\calH_0$ and $\calH_1$ are the $H_0$-algorithm and $H_1$ algorithm, respectively,  that we fixed.
This completes the proof of Theorem~\ref{th:LE}.
 \begin{figure}[t]
 {\small
\begin{algorithm*}{Algorithm~QLE}
\begin{description}
\setlength{\itemsep}{-1mm}
\item[Input:]
classical variable 
$\status:=\mbox{``$\eligible$''}$,
and integer ${n}$
\item[Output:]
classical variable $\status\in \{ \mbox{``$\eligible$''}, \mbox{``$\ineligible$"}\}$\\
\end{description}
\begin{step}
\setlength{\itemsep}{0mm}
\item
  Initialize quantum registers $\reg{R}$, $\reg{S}$, and $\reg{S}^\prime$to $\ket{0}$, $\ket{\mbox{``$\true$''}}$, and $\ket{\mbox{``$\true$''}}$ states, respectively.
\item
  If $\status= \mbox{``$\eligible$''}$,   apply 
  ${
     \bfA
     =
     {
       \scriptsize
       \frac{1}{\sqrt{n}}
       \begin{pmatrix}
	 \sqrt{n-1} & 1\\
	 1 & - \sqrt{n-1}
       \end{pmatrix}
     }
  }$
  to the qubit in $\reg{R}$ to  generate the quantum state
 ${
 \ket{\psi}
 =
 \sqrt{\frac{n-1}{n}}\ket{0} + \sqrt{\frac{1}{n}}\ket{1}
 }$
 in $\reg{R}$.
\\

\item
  Perform the exact amplitude amplification consisting of the following steps:
  \begin{step}
  \item
    To realize $F_{\chi}(\psi_{s(n)})$ for $s(n)=(1 - 1/n)^{n-1}$, perform the following steps:
    \begin{step}
    \item
      Perform an $H_1$-algorithm with $\reg{R}$ and $\reg{S}$, $n$.
    \item
      Multiply the content of $\reg{R}$ by a factor of  $\exp(i\frac{1}{n}\theta_{s(n)})$ if the content of
      $\reg{S}$ is ``$\true$''.
    \item
      Invert every computation and communication of step 3.1.1 to
      disentangle $\reg{S}$.
    \end{step}
  \item
    Invert the computation of Step 2.
  \item
    To realize $F_{0}(\phi_{s(n)})$, perform the following steps:
    \begin{step}
    \item
      Perform an $H_0$-algorithm with $\reg{R}$, $\reg{S}^\prime$ and $n$.
    \item 
Multiply the content of $\reg{R}$ by a factor of  $\exp(i\frac{1}{n}\phi_{s(n)})$  if the content of
      $\reg{S}^\prime$ is ``$\true$''.
    \item
      Invert every computation and communication of Step 3.3.1 to
      disentangle $\reg{S}^\prime$.
   \end{step}
  \item
    Perform the same operation as is performed in Step 2.
\end{step}
\item
  Measure $\reg{R}$ with respect to basis $\set{\ket{0},\ket{1}}$.
  If the result is $1$, then set $\status$ to ``$\eligible$''.
\item
  Output $\status$.
\end{step}
\end{algorithm*}
  \vspace{-1.5\baselineskip}
  \caption{Algorithm QLE}
  \label{fig: Algorithm QLE}
 }
 \end{figure}


\section{Proof of Theorem~\ref{th:H1}}
\label{subsec:ProofOfH1}
The proof consists of the following two steps:
\begin{itemize}
\item Reduce computing $H_1$ to computing $H_0$ and the consistency function $C_S$, 
where $C_S$ is a Boolean function that is $\true$ if and only if
a subset (specified by $S$) of all parties has the same classical value
(its formal definition will be given later).
\item Reduce computing $C_S$ to computing $H_0$.
\end{itemize}
Actually, the second step is almost trivial. We start with the first step.

\subsection{Basic Idea}
\label{sec:basic-idea}
Suppose that every party $i$ is given a Boolean variable $x_i$.  
We can probabilistically compute $H_1(\vec{x})$ with the following classical algorithm,
where $\vec{x}=(x_1,\dots ,x_n)$:
Every party $i$ with $x_i=1$ sets a variable $r_i$ to $0$ or $1$ each with probability $1/2$ and sends $r_i$ to all parties
(by taking $\delta$ rounds for the diameter $\delta$ of the underlying graph); 
every party $i$ with
$x_i=0$ sets variable $r_i$ to ``$\ast$'' and sends $r_i$ to all parties.  
It is not difficult to see that the following three hold: (i) if
$\abs{\vec{x}} = 0$, every party receives only ``$\ast$'', (ii) if
$\abs{\vec{x}} = 1$, either no party receives ``$1$'' or no party
receives ``$0$'', and (iii) if $\abs{\vec{x}} = t \geq 2$, every party
receives both ``$0$'' and ``$1$'' with probability $1 - 2/2^t$.
Therefore, every party can
conclude that $H_1(\vec{x})=\true$ 
($H_1(\vec{x})=\false$)
with probability one if
$\abs{\vec{x}}=1$ ($\abs{\vec{x}}=0$) and that $H_1(\vec{x})=\false$ with probability
$1-2/2^t\ge 1/2$ if $\abs{\vec{x}}=t \geq 2$. 
Roughly speaking, 
our quantum algorithm for computing $H_1$ is obtained by first quantizing this probabilistic
algorithm and then applying the exact quantum amplitude amplification to boost
the success probability to one. 
More concretely, we amplify the probability $p$ that there are both $0$ and $1$ among all $r_i$'s by using the exact amplitude amplification.
Let $p_{\text{init}}$ and $p_{\text{final}}$ be the values of $p$ before and after, respectively, applying the amplitude amplification. Obviously, if  $p_{\text{init}}=0$,
then $p_{\text{final}}=0$ also.
Hence, for $\abs{\vec{x}}\le 1$, $p_{\text{final}}=0$.
For $\abs{\vec{x}}\geq 2$, $p$ could be boosted to one if the exact value of 
$p_{\text{init}}$  were known to every party. However, $p_{\text{init}}$ is determined by $t$, the value of which may be harder to compute than to just decide whether $t=1$ or not.
Therefore, instead of actual $t$,
we run the amplitude amplification for each $t^{^\prime}:=2,\dots, n$, a \emph{guess} of $t$,  in parallel. We can then observe that exactly one of the $(n-1)$ runs boosts $p$ to one if and only if $\abs{\vec{x}}\geq 2$.

\subsection{Terminology}
Suppose that each party $i$ has a bit  $x_i$,
i.e., the $n$ parties share $n$-bit string ${\vec{x} = (x_1,x_2, \dots ,x_n)}$.
For convenience, we may consider that each $x_i$ expresses an integer,
and identify string $x_i$ with the integer it expresses.
For an index set ${S \subseteq \{1, \ldots, n\}}$,
string $\vec{x}$ is said to be \emph{consistent} over $S$
if $x_i$ is equal to $x_j$
for all $i,j$ in $S$.
Otherwise $\vec{x}$ is said to be \emph{inconsistent} over $S$.
Here, index set $S$ is used just for the definition (recall that no party has an index or identifier in the anonymous setting).
Formally, we assume that every party has a variable 
$z \in \set{\mbox{``$\marked$''},\mbox{``$\unmarked$''}}$,
and $S$ is defined as the set of all parties with 
$z=\mbox{``$\marked$''}$.
If $S$ is the empty set, any $\vec{x}$ is said to be consistent over $S$.
We also say that an $n$-qubit pure state
${\ket{\psi} = \sum_{\vec{x}\in \set{0,1}^{n}} \alpha_{\vec{x}} \ket{\vec{x}}}=
\sum_{\vec{x}\in \set{0,1}^{n}} \alpha_{\vec{x}} \ket{x_1}\otimes \dots \otimes \ket{x_n}
$
shared by the $n$ parties is \emph{consistent (inconsistent)} over $S$
if ${\alpha_{\vec{x}} \neq 0}$ only for $\vec{x}$\! 's that are consistent (inconsistent) over $S$
(there are pure states that are neither consistent nor inconsistent over $S$, but we do not need to define such states).

We next define the consistency function $C_S\colon\set{0,1}^{n}\to \set{\mbox{``$\consistent$''}, \mbox{``$\inconsistent$''}}$,
which decides if a given string $\vec{x}\in \set{0, 1}^{n}$ distributed over $n$ parties
is consistent over $S$. Namely, $C_S(\vec{x})$ returns ``$\consistent$'' if $\vec{x}$ is consistent over $S$ 
and ``$\inconsistent$'' otherwise.

\subsection{The $\boldsymbol{H_1}$-Algorithm}
As in the previous section, we fix an $H_0$-algorithm and a
$C_S$-algorithm, which we
use in a black-box manner.
At the beginning of the algorithm, every party prepares two one-qubit registers $\reg{X}$ and $\reg{Y}$.
We shall describe an $H_1$-algorithm that exactly computes function $H_1$ over
the contents of $\reg{X}$'s and sets
the content of each $\reg{Y}$ to the function value.
Here, we assume that registers $\reg{Y}$'s are initialized to $\ket{\mbox{``$\true$''}}$
for an orthonormal basis 
$\set{\ket{\mbox{``$\true$''}},\ket{\mbox{``$\false$''}} }$
of $\Complex ^2$.
We basically follow the idea in Section~\ref{sec:basic-idea}
to reduce computing $H_1$ to computing the binary-valued functions $H_0$ and $C_S$.
However, the idea actually represents a three-valued function, i.e., distinguishes among three cases: 
$\abs{\vec{x}}=0$, $\abs{\vec{x}}=1$, and $\abs{\vec{x}}\geq 2$.
Thus, we cast the idea into two yes-no tests. Namely, the algorithm first tests if $\abs{\vec{x}}$ is 0 or not. If
$\abs{\vec{x}}=0$, then it concludes
$H_1(\vec{x})=\mbox{``$\false$''}$. The algorithm then performs
another test to decide if $\abs{\vec{x}}\leq 1$ or $\abs{\vec{x}}\geq 2$, which determines $H_1(\vec{x})$.

\subsubsection{First Test} 
To test if $\abs{\vec{x}} = 0$, each party prepares a single-qubit
register $\sfS_0$, the content of which is initialized to
$\ket{``\true''}$. Each party then performs the $H_0$-algorithm to
exactly compute the value of $H_0$ over the contents of $\sfX$'s, and
stores the computed value in each $\sfS_0$.

From the definition of the $H_0$-algorithm, this transforms the state in $\sfX$'s and $\sfS_0$'s as follows:
\begin{align*}
\bigotimes_{i=1}^n
\big(
\ket{x_i}_{\reg{X}}
\ket{\mbox{``$\true$''}}_{\reg{Y}}
\ket{\mbox{``$\true$''}}_{\reg{S}_0}
\big)
&\mapsto
\bigotimes_{i=1}^n
\big(
\ket{x_i}_{\reg{X}}
\ket{\mbox{``$\true$''}}_{\reg{Y}}
\ket{H_0(\vec{x})}_{\reg{S}_0}
\big),\\
\intertext{by rearranging registers,}
&=
\underbrace{\ket{\vec{x}}}_{\reg{X}\text{'s}}
\ket{\mbox{``$\true$''}}_{\reg{Y}}^{\otimes n}
\ket{H_0(\vec{x})}_{\reg{S}_0}^{\otimes n}.
\end{align*}
If the content of $\reg{S}_0$ is ``$\true$'', then the content of $\reg{Y}$ will be set to $\mbox{``$\false$''}$ later
(because this means $\abs{\vec{x}}=0$).

\subsubsection{Second Test} 
Next each party tests if $\abs{\vec{x}}\le 1$ or $\abs{\vec{x}}\geq  2$ with certainty.
Recall the probabilistic algorithm in which
every party $i$ sets a variable $r_i$ to $0$ or
$1$ each with probability $1/2$ if $x_i=1$ and
sets variable $r_i$ to ``$\ast$'' if $x_i=0$,
and then sends $r_i$ to all parties.
Our goal is to amplify the probability $p$ that 
that there are both $0$ and $1$ among all $r_i$'s by using the exact amplitude amplification.
The difficulty is that 
no party knows the value of $p_{\operatorname{init}}$ ($=1-2/2^{\abs{\vec{x}}}$).
The test thus uses a \emph{guess} $t$ of $\abs{\vec{x}}$ and tries to 
amplify $p$
assuming that $p_{\operatorname{init}}=1-2/2^{t}$.
If $t=\abs{\vec{x}}$,
then the procedure obviously outputs the correct answer with probability one. 
If $t\neq \abs{\vec{x}}$,
the procedure may output the wrong answer.
As will be proved later, however, 
we can decide if $\abs{\vec{x}}\le 1$ or $\abs{\vec{x}}\geq 2$ without error
from the outputs of $(n-1)$-runs of the test for $t=2,\dots, n$,
which are performed in parallel.

We now describe the test procedure for each $t$. 
Assume that one-qubit register $\reg{Z}_t$
is initialized to $\ket{\mbox{``$\unmarked$ ''}}$.
The initial state is thus
\[
\sum _{\vec{x}\in\set{0,1}^n} \alpha _{\vec{x}}
\bigotimes_{i=1}^n
\big(
\ket{x_i}_{\reg{X}}
\ket{\mbox{``$\unmarked$ ''}}_{\reg{Z}_t}
\big),
\]
where registers $\reg{Y}$ and $\reg{S}_0$ are omitted to avoid complication.

The base algorithm $\calA$ (to be amplified) is described as follows.
If the content of $\reg{X}$ is $1$, 
the party flips the content of $\reg{Z}_t$ to ``$\marked$'',
where $\set{\ket{\mbox{``$\marked$ ''}},\ket{\mbox{``$\unmarked$ ''}}}$
is an orthonormal basis in $\Complex ^2$.
This operation just copies the contents of $\reg{X}$ to those of 
$\reg{Z}_t$ 
(in the different orthonormal basis)
for parallel use over all $t$.
The state is thus, for any fixed $\vec{x}$,
\[
\bigotimes_{i=1}^n
\big(
\ket{x_i}_{\reg{X}}
\ket{z_t(x_i)}_{\reg{Z}_t}
\big)
=
\Big(
\ket{1}_{\reg{X}}
\ket{\mbox{``$\marked$ ''}}_{\reg{Z}_t}
\Big)^{\otimes \abs{S}}
\otimes
\Big(
\ket{0}_{\reg{X}}
\ket{\mbox{``$\unmarked$ ''}}_{\reg{Z}_t}
\Big)^{\otimes (n-\abs{S})},
\]
 where $z_t(x_i)\in \set{\ket{\mbox{``$\marked$
       ''}},\ket{\mbox{``$\unmarked$ ''}}}$ is the content of
 $\reg{Z}_t$ when the content of $\reg{X}$ is $x_i$, and 
$S$ is
the set of the parties whose $\reg{Z}_t$ is in the state
$\ket{\mbox{``$\marked$ ''}}$ (note that $\abs{S}=\abs{\vec{x}}$).

If the content of $\reg{Z}_t$ is``$\marked$'',
apply the Hadamard operator 
$\bfH
=
\frac{1}{\sqrt{2}}
\left(
\begin{smallmatrix}
  1 & 1\\
1 & -1
\end{smallmatrix}
\right)
$ 
to the qubit in $\reg{R}_t$ to create $(\ket{0}+\ket{1})/\sqrt{2}$
(note that register $\reg{R}_t$ of each party $i$ is the quantum equivalent of $r_i$\footnote{
Here, the contents of $\reg{R}_t$'s of ``unmarked'' parties are set to $\ket{0}$, while the classical equivalents, variables $r_i$'s, of the parties are set to ``$\ast$'' (instead of 0). Actually,  the symbol ``$\ast$'' is used to distinguish 
between $\abs{\vec{x}}=0$ and $\abs{\vec{x}}=1$. However, we do not need it any longer due to the first test.}).
The state is now represented as, for the $\vec{x}$,
\[
\Big(
\ket{1}_{\reg{X}}
\ket{\mbox{``$\marked$ ''}}_{\reg{Z}_t}
\frac{\ket{0}_{\reg{R}_t}+\ket{1}_{\reg{R}_t}}{\sqrt{2}}
\Big)^{\otimes \abs{S}}
\otimes
\Big(
\ket{0}_{\reg{X}}
\ket{\mbox{``$\unmarked$ ''}}_{\reg{Z}_t}
\ket{0}_{\reg{R}_t}
\Big)^{\otimes (n-\abs{S})}.
\]
By rearranging registers, we have
\[
\underbrace{\ket{\vec{x}}}_{\reg{X}'s}
\underbrace{\ket{z_t(\vec{x})}}_{\reg{Z}_t's}
\left(
\frac{\ket{0}_{\reg{R}_t}+\ket{1}_{\reg{R}_t}}{\sqrt{2}}
\right)^{\otimes \abs{S}}
\ket{0}_{\reg{R}_t}^{\otimes (n-\abs{S})}
= 
\underbrace{\ket{\vec{x}}}_{\reg{X}\mbox{'s}}
\underbrace{\ket{z_t(\vec{x})}}_{\reg{Z}_t\mbox{'s}}
\underbrace{\ket{\psi_t(\vec{x})}}_{\reg{R}_t\mbox{'s}},
\]
where $\ket{z_t(\vec{x})}$ is the $n$-tensor product of 
$\ket{\mbox{``$\marked$ ''}}$ or $\ket{\mbox{``$\unmarked$ ''}}$ corresponding to $\vec{x}$,
and 
\[
\ket{\psi_t(\vec{x})}
=
\left(
\frac{1}{\sqrt{2^{\abs{S}}}}
\sum_{\vec{y}\in \set{0,1}^{\abs{S}}}
\ket{\vec{y}}
\right)
\ket{0}^{\otimes (n-\abs{S})}.
\]
This is the end of the base algorithm $\calA$.

We then boost the amplitudes of the basis states superposed in 
$\ket{\psi_t(\vec{x})}$ such that
there are both $\ket{0}$ and $\ket{1}$ in $\reg{R}_t$'s of parties in $S$,
i.e., the amplitudes of the states that are inconsistent over $S$,
with amplitude amplification.
Here, function $\chi$ in Theorem~\ref{th: quantum amplitude amplification} is the consistency function $C_S$
and $a(t)=1-2\left(\frac{1}{2}\right)^{t}$ is used as the success probability $a$.
For convenience, we express 
$\ket{\psi_t(\vec{x})}$ as
\[
\ket{\psi_t(\vec{x})}
=
\ket{\psi _{\inconsistent}} 
+
 \ket{\psi _{\consistent}},
\]
where,
\begin{align*}
\ket{\psi _{\inconsistent}}
&=
\left(\frac{1}{\sqrt{2^{\abs{S}}}}\sum _{\vec{y}\in \set{0,1}^{|S|}: \abs{\vec{y}}\neq 0,|S|} \ket{\vec{y}}\right)\ket{0}^{\otimes(n-|S|)},\\
\ket{\psi _{\consistent}}
&=
\frac{1}{\sqrt{2^{\abs{S}}}}
\left(
\ket{0}^{\otimes \abs{S}}+
\ket{1}^{\otimes\abs{S}}
\right)
\ket{0}^{\otimes(n-|S|)}.
\end{align*}

To realize $F_{\chi}(\theta_{a(t)})$, 
every party prepares a single-qubit register $\reg{S}_t$ initialized to $\ket{\mbox{``$\consistent$''}}$ and
        then performs the next operations:
(1) Perform a $C_S$-algorithm with $\reg{R} _t $, $\reg{S}_t$ and $\reg{Z}_t$,
which computes $C_S$ for each basis state $\ket{\vec{y}}\ket{0}^{n-\abs{S}}$ of
$\ket{\psi_t(\vec{x})}$
and sets the content of $\reg{S}_t$ to value of $C_S$;
(2) Multiply the amplitude of each basis state of $\reg{R}_t$ by a factor of $\exp\big(i\frac{\theta_{a(t)}}{n}\big)$
if the content of $\reg{S}_t$ is ``$\inconsistent$'';
(3) Finally invert every computation and communication of (1) to disentangle $\reg{S}_t$.
The state evolves with the above operations as follows:
\begin{align*}
\lefteqn{\ket{z_t(\vec{x})}\ket{\psi_t(\vec{x})}\ket{\mbox{``$\consistent$''}}^{\otimes n}}\\
&\mapsto
\ket{z_t(\vec{x})}
\left(
\ket{\psi _{\inconsistent}} \ket{\mbox{``$\inconsistent$''}}^{\otimes n}
+
\ket{\psi _{\consistent}} \ket{\mbox{``$\consistent$''}}^{\otimes n}
\right)
\\
&\mapsto
\ket{z_t(\vec{x})}
\left(
\big(e^{i\frac{\theta_{a(t)}}{n}}\big)^n\ket{\psi _{\inconsistent}} \ket{\mbox{``$\inconsistent$''}}^{\otimes n}
+
\ket{\psi _{\consistent}} \ket{\mbox{``$\consistent$''}}^{\otimes n}
\right)\\
&\mapsto
\ket{z_t(\vec{x})}
\left(
\left(e^{i\theta_{a(t)}}
\ket{\psi _{\inconsistent}}
+
\ket{\psi _{\consistent}}
\right) \ket{\mbox{``$\consistent$''}}^{\otimes n}
 \right).
\end{align*}
We have now finished the first operation, $F_{\chi}(\theta_{a(t)})$, of 
$-\calA F _{0}(\phi_{a(t)})\calA^{-1}F _{\chi}(\theta_{a(t)})$.

Then $\calA^{-1}$ is performed.
Operation $F_{0}(\phi_{a(t)})$ can be realized 
with the $H_0$-algorithm
in the same way as in Algorithm QLE in the previous section.
Finally, perform operation $\calA$ again. This is the end of the amplitude amplification.
In summary, the state over
$\reg{Z_t}$'s and $\reg{R}_t$'s is transformed
as follows:
\[
\ket{z_t(\vec{x})}\ket{\psi_t(\vec{x})}
\mapsto 
\ket{z_t(\vec{x})}\ket{\psi^{\prime}_t(\vec{x})},
\]
where
$\ket{\psi^{\prime}_t(\vec{x})}$ is expressed as in the following claim.
\begin{claim}
\label{claim:1}
\[
\ket{\psi^{\prime}_t(\vec{x})}=
\begin{cases}
  \sqrt{\frac{2^{\abs{S}}}{2^{\abs{S}-2}}} \ \ket{\psi _{\inconsistent}} & (\text{$\abs{S}\ge 2$ and $t=\abs{S}$}),\\
\beta \ket{\psi _{\inconsistent}} + \gamma\ket{\psi _{\consistent}}
& (\text{$\abs{S}\ge 2$ and $t\neq \abs{S}$}),\\ 
\ket{\psi _{\consistent}} & (\abs{S}\le 1 \text{ and all } $t$),
\end{cases}
\]
for some $\beta, \gamma\in \Complex$.
\end{claim}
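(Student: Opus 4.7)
The plan is to track the action of the amplitude amplification operator $Q = -\calA F_0(\phi_{a(t)}) \calA^{-1} F_\chi(\theta_{a(t)})$ on $\ket{\psi_t(\vec{x})}$, splitting according to whether $\abs{S} \geq 2$ and whether the guess $t$ equals $\abs{S}$.

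The conceptual core of the argument is the invariance of the two-dimensional subspace $W = \mathrm{span}\{\ket{\psi_{\inconsistent}}, \ket{\psi_{\consistent}}\}$ under $Q$. The operator $F_\chi(\theta)$ multiplies every inconsistent basis state $\ket{\vec{y}}\ket{0}^{\otimes (n-\abs{S})}$ (with $\abs{\vec{y}} \neq 0, \abs{S}$) by $e^{i\theta}$ and fixes all others; consequently it acts as $e^{i\theta}$ on $\ket{\psi_{\inconsistent}}$ and as the identity on $\ket{\psi_{\consistent}}$, preserving $W$. Since $\calA\ket{0}^{\otimes n} = \ket{\psi_t(\vec{x})} = \ket{\psi_{\inconsistent}} + \ket{\psi_{\consistent}}$, the operator $\calA F_0(\phi) \calA^{-1} = I + (e^{i\phi}-1)\,\ket{\psi_t(\vec{x})}\langle\psi_t(\vec{x})|$ is a rank-one perturbation along a vector lying in $W$, so it too preserves $W$.

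For Case~1 ($t = \abs{S} \geq 2$), the elementary norm computation $\|\ket{\psi_{\inconsistent}}\|^2 = (2^{\abs{S}}-2)/2^{\abs{S}} = a(t)$ and $\|\ket{\psi_{\consistent}}\|^2 = 1 - a(t)$ shows that the actual success probability matches the guessed value; since $a(t) \geq 1/2 > 1/4$, Theorem~\ref{th: quantum amplitude amplification} applies verbatim with $a = a(t)$ and yields $\ket{\psi'_t(\vec{x})} = a(t)^{-1/2}\,\ket{\psi_{\inconsistent}}$. For Case~2 ($\abs{S} \geq 2$ with $t \neq \abs{S}$), $W$-invariance alone suffices: $Q\ket{\psi_t(\vec{x})}$ must lie in $W$, hence takes the form $\beta\ket{\psi_{\inconsistent}} + \gamma\ket{\psi_{\consistent}}$ for some $\beta,\gamma \in \Complex$ depending on $t$ and $\vec{x}$. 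For Case~3 ($\abs{S} \leq 1$), there are no inconsistent basis states at all, so $\ket{\psi_t(\vec{x})} = \ket{\psi_{\consistent}}$ and $F_\chi$ acts as the identity; then $\calA^{-1}$ sends $\ket{\psi_t(\vec{x})}$ to $\ket{0}^{\otimes n}$, $F_0(\phi_{a(t)})$ multiplies by $e^{i\phi_{a(t)}}$, and $\calA$ brings us back, so $Q\ket{\psi_t(\vec{x})} = -e^{i\phi_{a(t)}}\,\ket{\psi_{\consistent}}$, which equals $\ket{\psi_{\consistent}}$ up to a global phase irrelevant to the subsequent test.

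The main obstacle is establishing the joint $W$-invariance of $F_\chi$ and $\calA F_0 \calA^{-1}$ in the non-standard decomposition used to define $\ket{\psi_{\inconsistent}}$ and $\ket{\psi_{\consistent}}$; once that is in place, Case~1 follows immediately from the cited theorem, Case~2 falls out for free from invariance alone, and Case~3 is a short direct calculation modulo the harmless global phase.
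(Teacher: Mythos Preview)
Your argument is correct and follows the same three-case split as the paper's proof, which simply invokes Theorem~\ref{th: quantum amplitude amplification} for the case $t=\abs{S}\ge 2$, calls the case $\abs{S}\ge 2,\ t\neq\abs{S}$ ``trivial'' (your $W$-invariance argument is precisely the justification the paper leaves implicit), and notes $\ket{\psi_t(\vec{x})}=\ket{\psi_{\consistent}}$ when $\abs{S}\le 1$. Your observation that Case~3 actually yields $-e^{i\phi_{a(t)}}\ket{\psi_{\consistent}}$ rather than $\ket{\psi_{\consistent}}$ on the nose is a point the paper silently glosses over; as you note, the extra phase is harmless for the subsequent $C_S$ test and is in any case undone by the inversion in Step~5 of the $H_1$-algorithm.
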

\begin{proofof}{Claim~\ref{claim:1}}
If $\abs{S}\ge 2$ and $t=|S|$,
the claim follows from Theorem~\ref{th: quantum amplitude amplification}.
If $\abs{S}\ge 2$ and $t\neq |S|$, the claim is trivial.
If $\abs{S}\le 1$, 
then $\ket{\psi_t(\vec{x})}= \ket{\psi _{\consistent}}$; thus, $\ket{\psi^{\prime}_t(\vec{x})}=\ket{\psi _{\consistent}}$.
\end{proofof}

Each party then prepares a new quantum register $\reg{S}_t^{\prime\prime}$ (initialized to $\ket{\mbox{``$\consistent$''}}$) and
performs again
the $C_S$-algorithm with $\reg{R} _t $, $\reg{S}_t^{\prime\prime}$ and $\reg{Z}_t$,
which transforms the state as follows:
\[
\underbrace{\ket{z_t(\vec{x})}}_{\reg{Z}_t\text{'s}}
\underbrace{\ket{\psi^{\prime}_t(\vec{x})}}_{\reg{R}_t\text{'s}}
\underbrace{\ket{\mbox{``$\consistent$''}}^{\otimes n}}_{\reg{S}_t^{\prime\prime}\text{'s}}
\to 
\ket{z_t(\vec{x})}
\underbrace{\ket{\Psi_t(\vec{x})}}_{\reg{R}_t\text{'s, }\reg{S}_t^{\prime\prime}\text{'s}},
\]
where 
\[
\ket{\Psi_t(\vec{x})}=
\begin{cases}
  \sqrt{\frac{2^{\abs{S}}}{2^{\abs{S}-2}}} \ \ket{\psi _{\inconsistent}}\ket{\mbox{``$\inconsistent$''}}^{\otimes n} & (\text{$\abs{S}\ge 2$ and $t=\abs{S}$}),\\
\beta \ket{\psi _{\inconsistent}}\ket{\mbox{``$\inconsistent$''}}^{\otimes n}  + \gamma\ket{\psi _{\consistent}}\ket{\mbox{``$\consistent$''}}^{\otimes n} 
& (\text{$\abs{S}\ge 2$ and $t\neq \abs{S}$}),\\ 
\ket{\psi _{\consistent}}\ket{\mbox{``$\consistent$''}}^{\otimes n}  & (\abs{S}\le 1 \text{ and all } t).
\end{cases}
\]

\subsubsection{Final Evaluation }
After the first test and the second tests for $t=2,\dots ,n$, the state is now
\[
\underbrace{\ket{\vec{x}}}_{\reg{X}}
\otimes
\underbrace{\ket{\true}^{\otimes n}}_{\reg{Y}\text{'s}}
\otimes
\bigg(
\underbrace{\ket{H_0(\vec{x})}^{\otimes n} }_{\reg{S}_0\text{'s}}
\bigg)
\otimes
\bigg(
\bigotimes _{t=2} ^n 
\ket{z_t(\vec{x})}
\underbrace{\ket{\Psi_t(\vec{x})}}_{\reg{R}_t\text{'s, }\reg{S}_t^{\prime\prime}\text{'s}}
\bigg).
\]

Recall that every party has registers $\reg{Y}$, $\reg{S}_0$, $\reg{S}_t^{\prime\prime}$ for $t=2,\dots, n$.
In the final step of our algorithm for computing $H_1$, every party concludes the value of $H_1(\vec{x})$
from the contents of $\reg{S}_0$ and $\reg{S}_t^{\prime\prime}$'s as follows:
\begin{itemize}
\item If either the content of $\reg{S}_0$ is ``$\true$'' or the
    content of $\reg{S}_t^{\prime\prime}$ is ``$\inconsistent$'' for some
    $t\in \set{2,\dots, n}$, then every party sets the content of $\reg{Y}$ to ``$\false$''.
\end{itemize}
It is not difficult to show the correctness.
If the content of $\reg{S}_0$ is ``$\true$'', then the value of $H_1(\vec{x})$ is obviously ``$\false$''
(because $\abs{\vec{x}}=0$). 
Suppose that the content of $\reg{S}_0$ is ``$\false$'', i.e., $\abs{S}\neq 0$.
From the definition of $\ket{\Psi_t(\vec{x})}$, we can observe the following facts:
(1) If $\abs{\vec{x}}:=\abs{S}=1$, then 
the contents of $\reg{S}_t^{\prime\prime}$ are ``$\consistent$'' for all $t=2,\dots ,n$.
(2) If $\abs{\vec{x}}:=\abs{S}\ge 2$,
then the content of $\reg{S}_t^{\prime\prime}$ is ``$\inconsistent$'' for some $t\in \set{2,\dots ,n}$.
More precise description of our algorithm is given 
in Figure~\ref{fig:H_1-Algorithm}.

 \begin{figure}[htb]
   \centering
{\small
\begin{algorithm*}{$\boldsymbol{H_1}$-Algorithm}
  \begin{description}
\setlength{\itemsep}{0pt}
  \item[Input:] Single-qubit registers $\reg{X}$ and $\reg{Y}$ (W.L.O.G., initialized to $\ket{\mbox{``$\true$''}}$), an integer $n$.
  \item[Output:] Single-qubit registers $\reg{X}$ and $\reg{Y}$.
  \end{description}
  \begin{step}
\setlength{\itemsep}{0pt}
  \item Initialize $n$ single-qubit registers $\reg{R}_0,\reg{R}_2,\dots ,\reg{R}_n$ to $\ket{0}$.
  \item Prepare a single-qubit register $\reg{S}_0$ and then perform the following steps of the first test:
    \begin{step}
    \item 
      Copy the content of $\reg{X}$ to that of $\reg{R}_0$
      in the $\set{\ket{0},\ket{1}}$ basis (i.e., apply CNOT to $\reg{R}_0$ 
with $\reg{X}$  as control).
    \item 
      Perform an $H_0$-algorithm with $\reg{R}_0$, $\reg{S}_0$ and $n$, which computes $H_0$ over the contents of $\reg{R}_0$'s of all parties  and store the result into $\reg{S}_0$.\\
    \end{step}
  \item   Perform the following steps of the second test for $t=2,\dots , n$ in parallel:
    \begin{step}
    \item If the content of $\reg{X}$ is $1$, set the content of $\reg{Z}_t$ to ``$\marked$ ''; otherwise
set it to ``$\unmarked$ ''.
    \item 
      If the content of $\reg{Z}_t$ is ``$\marked$ '', apply the Hadamard operator on the qubit in $\reg{R}_t$ (to create $\frac{\ket{0}+\ket{1}}{\sqrt{2}}$).
      \item
	To realize $F_{\chi}(\psi_{a(t)})$, 
        prepare a single-qubit quantum register $\reg{S}_t$ and perform the following operations:
      \begin{step}
    \item 
     Perform a $C_S$-algorithm with $\reg{R} _t $, $\reg{S}_t$, $\reg{Z}_t$ and $n$, which computes $C_S$ over the contents of $\reg{R}_t$'s of all parties for $S$ defined by the contents of $\reg{Z}_t$'s,
and stores the result into $\reg{S}_t$.
   \item Multiply  the state of $\reg{R}_t$ by a factor of $e^{i\frac{1}{n}\psi_{a(t)}}$ if the content of $\reg{S}_t$ is ``$\inconsistent$'',
        where 
        $a(t)$ is the probability of measuring inconsistent states in $\left(
          \frac{\ket{0}+\ket{1}}{\sqrt{2}} \right)^{\otimes t}$, i.e.,
        $1-2\left(\frac{1}{2}\right)^{t}$.
        \item
	  Invert every computation and communication of Step 3.3.1 to disentangle $\reg{S}_t$.
      \end{step}
    \item Invert the computation in Step 3.2.
      \item
	To realize $F_{0}(\phi_{a(t)})$, 
        prepare a single-qubit quantum register $\reg{S}^{\prime}_t$ and
        perform the following  operations:
        \begin{step}
    \item Perform the $H_0$-algorithm with $\reg{R} _t$, $\reg{S}^{\prime}_t$ and $n$,
which computes $H_0$ over the contents of $\reg{R}_0$'s of all parties and store the result into $\reg{S}^{\prime}_t$.\\
    \item Multiply the state of $\reg{R}_t$ by a factor of $e^{i\frac{1}{n}\phi_{a(t)}}$ if the content of 
        $\reg{S}_t^\prime$ is ``$\true$''.
        \item
	  Invert every computation and communication of Step 3.5.1 to disentangle $\reg{S}^\prime_t$.
        \end{step}
        \item Perform the same operation as in Step~3.2
        \item Prepare a fresh single-qubit register $\reg{S}_t^{\prime\prime}$, and
     perform a $C_S$-algorithm with $\reg{R} _t $, $\reg{S}_t^{\prime\prime}$, $\reg{Z}_t$ and $n$, which computes $C_S$ over the contents of $\reg{R}_t$'s of all parties for $S$ defined by the contents of $\reg{Z}_t$'s,
and stores the result into $\reg{S}_t^{\prime\prime}$.
    \end{step}
  \item If either the content of $\reg{S}_0$ is ``$\true$'' or the
    content of $\reg{S}_t^{\prime\prime}$ is ``$\inconsistent$'' for some
    $t\in \set{2,\dots, n}$, then turn $\reg{Y}$
    over (i.e., transform the state $\ket{\mbox{``$\true$''}}$ of $\reg{Y}$ into $\ket{\mbox{``$\false$''}}$).
  \item Invert every computation and communication of Steps 2 and 3 to disentangle all registers except $\reg{X}$ and $\reg{Y}$.
  \item Output $\reg{X}$ and $\reg{Y}$, and then halt.
  \end{step}
\end{algorithm*}
}
 \vspace{-3mm}
   \caption{$H_1$-Algorithm}
   \label{fig:H_1-Algorithm}
 \end{figure}

\begin{lemma}
\label{lm:H_1-Algorithm-Reduction}
For any graph $G\in \calG_n$,
if every party knows the number $n$ of parties.
there is an $H_1$-algorithm that runs in
$O(Q^{\operatorname{rnd}}_G(\calH_0)+Q^{\operatorname{rnd}}_G(\calC_S))$ rounds
with bit complexity $O(n(Q^{\operatorname{bit}}_G(\calH_0)+Q^{\operatorname{bit}}_G(\calC_S)))$,
where $\calH_0$ and $\calC_S$ are any $H_0$-algorithm and any $C_S$-algorithm, respectively.
\end{lemma}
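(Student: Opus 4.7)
The plan is to establish both correctness and complexity of the algorithm presented in Figure~\ref{fig:H_1-Algorithm}. Correctness reduces to a three-way case analysis on $\abs{\vec{x}}$, leaning on Claim~\ref{claim:1} as the key structural fact about the post-amplification state $\ket{\Psi_t(\vec{x})}$; the complexity follows by summing the black-box calls to $\calH_0$ and $\calC_S$.

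For correctness I would argue term by term against a general input $\sum_{\vec{x}} \alpha_{\vec{x}} \ket{\vec{x}}$. After Step~2, $\reg{S}_0$ holds $\ket{H_0(\vec{x})}$ deterministically, so the $\abs{\vec{x}} = 0$ branch of Step~4 already flips $\reg{Y}$ correctly. For $\abs{\vec{x}} \geq 1$, Claim~\ref{claim:1} pins down the shape of each $\reg{S}_t''$: when $\abs{\vec{x}} = 1$, every $\reg{S}_t''$ is $\ket{\text{``consistent''}}^{\otimes n}$ with certainty, so $\reg{Y}$ retains $\ket{\text{``true''}}$; when $\abs{\vec{x}} \geq 2$, the block for $t = \abs{\vec{x}}$ forces $\reg{S}_{\abs{\vec{x}}}''$ to be $\ket{\text{``inconsistent''}}^{\otimes n}$.

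The main subtlety I anticipate is that for $\abs{\vec{x}} \geq 2$ and $t \neq \abs{\vec{x}}$ the register $\reg{S}_t''$ may sit in a non-trivial superposition of consistent and inconsistent. The saving observation is that Step~4 implements the OR through a reversible compute-flip-uncompute pattern on the computational basis, and in every basis vector of the joint superposition the deterministic $t = \abs{\vec{x}}$ factor already makes the OR evaluate to true. Hence $\reg{Y}$ flips to $\ket{\text{``false''}}$ with certainty, and the uncomputation in Step~5 disentangles every register except $\reg{X}$ and $\reg{Y}$, leaving the target state $\sum_{\vec{x}} \alpha_{\vec{x}} \ket{\vec{x}}_{\reg{X}} \ket{H_1(\vec{x})}_{\reg{Y}}$ and certifying that the procedure is an $H_1$-algorithm.

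For the complexity, Step~2 issues one $\calH_0$-call; for each fixed $t$, Step~3 issues two $\calC_S$-calls (Steps~3.3.1 and 3.6) and two $\calH_0$-calls (Step~3.5 and its inverse), with all remaining operations local or single-party, giving a per-$t$ cost of $O(Q^{\operatorname{rnd}}_G(\calH_0) + Q^{\operatorname{rnd}}_G(\calC_S))$ rounds and $O(Q^{\operatorname{bit}}_G(\calH_0) + Q^{\operatorname{bit}}_G(\calC_S))$ bits. Running the $n - 1$ copies for $t = 2, \ldots, n$ in parallel on disjoint register sets preserves the round count while multiplying the bit count by $n - 1$, and Step~5 only doubles both totals. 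Combining these contributions yields the stated $O(Q^{\operatorname{rnd}}_G(\calH_0) + Q^{\operatorname{rnd}}_G(\calC_S))$ rounds and $O(n(Q^{\operatorname{bit}}_G(\calH_0) + Q^{\operatorname{bit}}_G(\calC_S)))$ bits, completing the proof.
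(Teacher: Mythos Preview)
Your argument is correct and follows essentially the same approach as the paper, which simply defers correctness to the preceding algorithm description and notes that all communication occurs in the parallel $\calH_0$ and $\calC_S$ calls. Two minor bookkeeping slips (neither affecting the asymptotics): the second $\calC_S$-call in Step~3 is Step~3.7, not Step~3.6 (which is the Hadamard), and you omitted the inverse $\calC_S$-call in Step~3.3.3, so there are actually three $\calC_S$-invocations per $t$ before Step~5 doubles everything.
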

\begin{proof}
The correctness follows from the above description of the algorithm. For the complexity, 
all communications are performed for computing $H_0$ 
and then computing $C_S$ for $t=2,...,n$ in parallel. Therefore, the lemma follows.
\end{proof}

\subsection{Computing $\boldsymbol{C_S}$ with Any $\boldsymbol{H_0}$-Algorithm}
We now show that computing $C_S$ is reducible to computing $H_0$.
\begin{lemma}
\label{lm:reduction-consisency_to_H0}
For any graph $G\in \calG_n$,
there is a $C_S$-algorithm  that runs in $O(Q_G^{\operatorname{M}}(\calH_0))$ rounds
with bit complexity $O(Q_G^{\operatorname{M}}(\calH_0))$, where $\calH_0$ is any $H_0$-algorithm.
\end{lemma}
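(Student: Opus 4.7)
The plan is to reduce $C_S$ to two parallel invocations of the $H_0$-algorithm $\calH_0$ via a simple local preprocessing that converts consistency-over-$S$ into Hamming-weight-zero. Concretely, observe that $\vec{x}$ is consistent over $S$ exactly when either every marked party holds $0$ or every marked party holds $1$ (this also handles $S=\emptyset$, which satisfies both conditions vacuously). So I would have each party $i$ allocate two fresh ancillary qubits $\reg{Y}_i$ and $\reg{Y}'_i$ and locally (and reversibly) set $y_i := x_i \wedge [z_i = \mbox{``$\marked$''}]$ and $y'_i := \bar{x}_i \wedge [z_i = \mbox{``$\marked$''}]$ using CNOT/Toffoli-type gates on its own $\reg{X}$ and $\reg{Z}$ registers. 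These local computations involve no communication.

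Next, each party runs $\calH_0$ on the $\reg{Y}$'s (computing $H_0(\vec{y})$ into a fresh register $\reg{T}_0$) and, in parallel, $\calH_0$ on the $\reg{Y}'$'s (computing $H_0(\vec{y}')$ into a fresh register $\reg{T}_1$). By the characterization above, $C_S(\vec{x}) = \mbox{``$\consistent$''}$ iff $\reg{T}_0$ or $\reg{T}_1$ holds $\mbox{``$\true$''}$, so each party computes the OR of $\reg{T}_0$ and $\reg{T}_1$ into a fresh output register $\reg{U}$ by a local Toffoli-style operation. To produce a clean $C_S$-algorithm (matching the definition as a transformation that leaves only the input and the output register entangled with input), every party then inverts the two $\calH_0$ invocations and the local preprocessing to disentangle $\reg{Y}_i, \reg{Y}'_i, \reg{T}_0, \reg{T}_1$; since the procedure is reversible and deterministic on classical inputs, this disentangling works correctly on arbitrary superpositions of inputs.

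For complexity, all communication occurs inside the two parallel runs of $\calH_0$ (forward) and their inversions; everything else is purely local. Hence both round complexity and bit complexity are $O(Q_G^{\operatorname{rnd}}(\calH_0))$ and $O(Q_G^{\operatorname{bit}}(\calH_0))$ respectively, matching the claim. There is no real obstacle: the only subtlety is checking that the local preprocessing is fully reversible (which it is, since each $y_i, y'_i$ is a Boolean function of $(x_i, z_i)$ written into a fresh ancilla by classical reversible gates) and that the two boundary cases $|S| = 0$ and $|S| = 1$ are correctly classified as consistent, both of which follow immediately from the $\vec{y}, \vec{y}'$ construction.
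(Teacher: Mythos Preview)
Your proposal is correct and essentially identical to the paper's own proof: both observe that $\vec{x}$ is consistent over $S$ iff all marked parties hold $0$ or all hold $1$, and reduce each disjunct to a single run of $\calH_0$ by having unmarked parties contribute $0$ and, for the second disjunct, having marked parties negate their bit. Your write-up is slightly more explicit about reversibility and garbage erasure, but the underlying construction and the complexity accounting match the paper exactly.
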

\begin{proof}
Function $C_S$ can be computed by first 
computing in parallel $H_0$ and $H_{\abs{S}}$ over the input bits
of the parties associated with $S$, and then computing OR of them. 
To compute 
$H_0$ over the $\abs{S}$ bits with any $H_0$-algorithm over $n$ bits,
every party $i$ with $i\not\in S$ sets his input to $0$,
and all parties then run the $H_0$-algorithm.
Similarly, (the negation of)
$H_{\abs{S}}$ over the $\abs{S}$ bits can be computed except that
every party $i$ with $i\in S$ negates his/her input.
\end{proof}

Lemmas~\ref{lm:H_1-Algorithm-Reduction} and \ref{lm:reduction-consisency_to_H0}
imply that,
for any graph $G\in \calG_n$,
there is an $H_1$-algorithm that runs in
$O(Q^{\operatorname{rnd}}_G(H_0))$ rounds
with bit complexity  $O(n\cdot Q^{\operatorname{bit}}_G(H_0))$.
This completes the proof of Theorem~\ref{th:H1}.

Theorem~\ref{th:H1} can easily be generalized to the case where
only an upper bound $N$ of $n$ is given to every party.
Suppose that we are given an $H_0$-algorithm that works for a given
upper bound $N$ of $n$.
The proof of Lemma~\ref{lm:reduction-consisency_to_H0} then implies 
that there exists a $C_S$-algorithm
that can work even if only an upper bound $N$ is given.
We can thus make 
an $H_1$-algorithm 
that works for the upper bound $N$,
by performing the first test and then the second tests for $t=2,\dots , N$ in parallel.

\begin{theorem}
\label{th:H1_Upperbound}
If only an upper bound $N$ of the number $n$ of parties is provided to each party,
function $H_1$ can exactly be computed
without intermediate measurements
for any possible quantum state as input
in 
$O(Q^{\operatorname{rnd}}(H_0))$ rounds
with bit complexity 
$O(N\cdot Q^{\operatorname{bit}}(H_0))$
on  an anonymous 
quantum network of any unknown topology.
\end{theorem}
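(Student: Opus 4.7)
The plan is to revisit the two reductions in the proof of Theorem~\ref{th:H1} (reducing $H_1$ to $H_0$ and $C_S$, then reducing $C_S$ to $H_0$) and verify that each step continues to work when every party is only told an upper bound $N\ge n$, provided that the $H_0$-algorithm supplied as a black box is itself designed for that upper bound. In particular, the underlying $H_0$-algorithm is assumed to set $\reg{S}_0$ to ``$\true$'' precisely when all parties' inputs are $0$, regardless of the fact that the actual number of parties $n$ may be strictly smaller than $N$.

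First I would observe that Lemma~\ref{lm:reduction-consisency_to_H0} carries over verbatim: the construction computes $C_S$ by running the black-box $H_0$-algorithm on inputs derived locally from each party's bit and its $\reg{Z}$-marker (with unmarked parties zeroing their input, or marked parties negating theirs), so nothing in the reduction refers to $n$; only the provided $H_0$-algorithm does, and that algorithm is assumed to work under the upper bound $N$. Consequently, a $C_S$-algorithm exists under the same assumption, with the same round and bit complexity up to constants.

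Next I would check the $H_1$-algorithm in Figure~\ref{fig:H_1-Algorithm}. The first test (computing $H_0$ on the contents of $\reg{X}$'s and storing the result in $\reg{S}_0$) uses the $H_0$-algorithm as a black box and is unaffected. For the second test, the only dependence on $n$ is the range of the guess $t$, which is run in parallel for $t=2,\dots,n$. I would simply extend this range to $t=2,\dots,N$. Correctness then follows from the same case analysis as before: if $\abs{\vec{x}}=0$, the first test flips $\reg{Y}$; if $\abs{\vec{x}}=1$, Claim~\ref{claim:1} guarantees $\ket{\Psi_t(\vec{x})}$ is in the consistent branch for every $t\ge 2$, so no $\reg{S}_t^{\prime\prime}$ is ``$\inconsistent$'' and $\reg{Y}$ is left as ``$\true$''; if $\abs{\vec{x}}\ge 2$, the true value $t=\abs{\vec{x}}$ lies in $\{2,\dots,N\}$ because $\abs{\vec{x}}\le n\le N$, so amplitude amplification with the correct guess boosts the amplitude of $\ket{\psi_{\inconsistent}}$ to one and the corresponding $\reg{S}_t^{\prime\prime}$ becomes ``$\inconsistent$'' with certainty, flipping $\reg{Y}$.

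The only change in the complexity analysis is that there are now $N-1$ parallel branches of the second test instead of $n-1$. All branches consist of a constant number of invocations of the $H_0$- and $C_S$-algorithms together with local phase-shift operations, and they share the underlying communication schedule in parallel in the same way as in the proof of Theorem~\ref{th:H1}. Hence the round complexity remains $O(Q^{\operatorname{rnd}}(H_0))$ while the bit complexity becomes $O(N\cdot Q^{\operatorname{bit}}(H_0))$. I expect no substantive obstacle; the mild subtlety is just noting that the guess range must still include the true value of $\abs{\vec{x}}$, which is automatic under $n\le N$.
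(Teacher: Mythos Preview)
Your proposal is correct and mirrors the paper's own argument almost verbatim: the paper also observes that the proof of Lemma~\ref{lm:reduction-consisency_to_H0} carries over to yield a $C_S$-algorithm under the upper-bound assumption, and that the $H_1$-algorithm is then adapted simply by running the second tests for $t=2,\dots,N$ in parallel.

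One small inaccuracy worth flagging: you assert that ``the only dependence on $n$ is the range of the guess $t$,'' but the distributed phase-shift steps (Steps~3.3.2 and~3.5.2 in Figure~\ref{fig:H_1-Algorithm}) also use the factor $e^{i\frac{1}{n}\theta_{a(t)}}$ (respectively $e^{i\frac{1}{n}\phi_{a(t)}}$), and with only $N$ available the parties cannot split the phase into $n$ equal pieces. This is not a real obstacle---for instance, one can have only the \emph{marked} parties apply $e^{i\frac{1}{t}\theta_{a(t)}}$, which produces the correct total phase exactly when $t=\abs{S}$, and Claim~\ref{claim:1} shows that the case $t\neq\abs{S}$ is irrelevant to correctness---but it is a dependence on $n$ that you overlooked. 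The paper's brief argument glosses over this point as well.
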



\section{Improved Algorithm for $\boldsymbol{\LE_n}$}
As an application of Theorems~\ref{th:LE} and \ref{th:H1}, we present a quantum algorithm that exactly solves
$\LE_n$, which runs with less round complexity
than the existing algorithms
while keeping the best bit complexity.

\begin{proofof}{Corollary~\ref{cr:LE}}  
We first give a simple $H_0$-algorithm in order to apply Theorems~\ref{th:LE} and \ref{th:H1}. 
The algorithm is a straight-forward quantization of the following deterministic algorithm:
Every party sends his input bit to each adjacent party 
(and keep the information of the bit for himself).
Every party then computes the OR of all the bits he received and the bit kept by himself
and sends the resulting bit to each adjacent party
(and keep the information of the bit for himself).
By repeating this procedure $\Delta$ times for an upper bound $\Delta$ of the network diameter,
every party can know the OR of all bits and thus the value of $H_0$.
This classical algorithm can easily be converted to the quantum equivalent
with the same complexity (up to a constant factor).

Thus, we have proved the following claim.
\begin{claim}
\label{cl:H_0-Algorithm-complexity}
Let $G$ be any graph in $\calG_n$, and let $m$ be the number of edges in $G$.
Then,
there is an $H_0$-algorithm that runs
in $O(\Delta)$ rounds with bit complexity
$O(\Delta m)$ on an anonymous quantum network of the underlying graph $G$
 (i.e., $Q^{\operatorname{rnd}}_G(\calH_0)= O(\Delta)$ and  $Q^{\operatorname{bit}}_G(\calH_0)= O(\Delta m)$
for some $H_0$-algorithm $\calH_0$)
if the upper bound $\Delta$ of the diameter of $G$ is given to each party.
\end{claim}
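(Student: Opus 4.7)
The plan is to exhibit a concrete quantum algorithm for $H_0$ by quantizing the natural deterministic flooding protocol described just before the claim, and then to account for its round and bit complexities.

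First I would fix the classical template: each party $l$ keeps a register $\reg{A}_l$ (the ``accumulated OR'') initialized to its input bit $x_l$. In each of $\Delta$ rounds, every party sends the current value of $\reg{A}_l$ to each neighbor, receives a bit from each neighbor along each incident port, and replaces $\reg{A}_l$ by the OR of its old content and all received bits. Since information from every node reaches every other node within $\delta\le \Delta$ hops along the graph, after $\Delta$ rounds each $\reg{A}_l$ holds $x_1\vee\cdots\vee x_n$, which is the negation of $H_0(\vec{x})$. A final local NOT converts this to $H_0(\vec{x})$.

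Next I would quantize this protocol so that it behaves correctly on a superposed distributed input of the form $\sum_{\vec{x}}\alpha_{\vec{x}}\bigotimes_i\ket{x_i}$ without any intermediate measurement. For each round $r\in\{1,\dots,\Delta\}$ and each incident port $j$ of party $l$, the party allocates a fresh ancilla qubit $\reg{M}_{l,r,j}$ initialized to $\ket{0}$, applies a CNOT from $\reg{A}_l$ into $\reg{M}_{l,r,j}$, and transmits $\reg{M}_{l,r,j}$ over the corresponding link; simultaneously, it receives the analogous qubit sent by its neighbor. The local OR-update is then implemented as the reversible map $\ket{a}\ket{b_1}\cdots\ket{b_{d_l}}\ket{0}\mapsto \ket{a}\ket{b_1}\cdots\ket{b_{d_l}}\ket{a\vee b_1\vee\cdots\vee b_{d_l}}$ on a fresh ancilla, using a standard CNOT/Toffoli circuit, followed by swapping this ancilla into $\reg{A}_l$. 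All of this is unitary and defined uniformly at every party, so it respects the anonymous-network constraint. After $\Delta$ rounds each party's $\reg{A}_l$ holds the desired OR in every branch of the superposition; the party then CNOTs this value into its designated output register $\reg{S}_0$ (with a NOT, to produce $H_0(\vec{x})$) and finally applies the standard garbage-erasure procedure by reversing every preceding computation and communication to restore all ancillas and $\reg{A}_l$'s to their initial states, leaving only the inputs and the output untouched.

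For the complexity, each of the $\Delta$ forward rounds uses one qubit per port per party, so $2m$ qubits are transmitted per round, giving $O(\Delta m)$ qubits across the flooding phase; the reversed garbage-erasure phase incurs the same bound, and the local $\reg{A}_l\to \reg{S}_0$ copy is free of communication. Thus the total round complexity is $O(\Delta)$ and the bit complexity is $O(\Delta m)$, as claimed. The only subtlety, and the one I would be most careful to justify, is the uncomputation step: we must reverse all of the communications exactly, which is why each round uses fresh ancillas rather than overwriting a single message qubit per link, so that the inverse circuit is a well-defined sequence of local unitaries and qubit transmissions that unwinds the flooding and restores the initial ancilla state independently of the superposition structure of the input.
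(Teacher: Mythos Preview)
Your proposal is correct and follows essentially the same approach as the paper: quantize the deterministic OR-flooding protocol over $\Delta$ rounds and observe that each round communicates $O(m)$ qubits. The paper's own argument is terser (it simply asserts that the classical flooding algorithm ``can easily be converted to the quantum equivalent with the same complexity''), whereas you spell out the quantization with fresh per-round ancillas and the garbage-erasure reversal, but the underlying idea is identical.
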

Corollary~\ref{cr:LE} follows from Theorems~\ref{th:LE}, \ref{th:H1} and Claim~\ref{cl:H_0-Algorithm-complexity}
with the trivial upper bound $n$ of $\Delta$.
\end{proofof}

Corollary~\ref{cr:LE} improves the complexity of the existing quantum algorithms for $\LE_n$ in Ref.~\cite{TanKobMat05STACS}.
For particular classes of graphs, it is known that $H_1$ can be computed as efficiently as $H_0$.
In this case, a direct application of Theorem~\ref{th:LE} gives a better
bound. For a ring network, both $H_0$ and $H_1$ can be computed in $O(n)$ rounds with bit complexity $O(n^2)$.

More generally, 
Kranakis et al.~\cite{KraKriBer94InfoComp} developed a random-walk-based classical algorithm
that efficiently computes any symmetric function 
if the stochastic matrix $P$ of the random walk on the underlying graph augmented with self-loops
has a large second eigenvalue (in the absolute sense).
By using this algorithm to compute $H_0$ and $H_1$,
Theorem~\ref{th:LE} yields an efficient algorithm
for the graphs with a large eigenvalue gap.
\begin{corollary}
  Let $G\in \calG_n$ and let $G^\prime$ be the
  graph $G$ with self-loops added to each node. Let $\lambda$ be the
  second largest eigenvalue (in absolute value) of the stochastic
  matrix $P$ associated with $G^\prime$. There is an algorithm that exactly elects a unique leader 
  in $O\left(-\frac{\log n}{\log \lambda}\right)$ rounds with 
 bit complexity $O\left(-\frac{m}{\log \lambda}(\log n)^2\right)$
on an anonymous quantum network with the underlying graph $G$,
where $m$ is the number of edges of $G$.
\end{corollary}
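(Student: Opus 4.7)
The plan is to invoke Theorem~\ref{th:LE} directly, using the random-walk-based algorithm of Kranakis et al.~\cite{KraKriBer94InfoComp} as the black-box procedures for both $\calH_0$ and $\calH_1$. Since Theorem~\ref{th:LE} bounds the round and bit complexity of leader election by $O(Q^{\operatorname{rnd}}(\calH_0)+Q^{\operatorname{rnd}}(\calH_1))$ and $O(Q^{\operatorname{bit}}(\calH_0)+Q^{\operatorname{bit}}(\calH_1))$ respectively, the whole task reduces to giving $H_0$- and $H_1$-algorithms whose complexities match the stated bounds.

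First I would recall how the Kranakis--Krizanc--Berman procedure computes an arbitrary symmetric function over an anonymous network whose underlying graph $G$ is augmented with self-loops, using the stochastic matrix $P$ of the lazy random walk on $G'$. By iterating $P$ for $\Theta\!\left(-\log n/\log\lambda\right)$ rounds, the estimates held at each node converge to the average of the distributed inputs within a tolerance small enough that the Hamming weight can be inferred without error (one uses the known value of $n$ to round the estimates to the correct integer). In each round every node exchanges an $O(\log n)$-bit message with each neighbour, so the bit complexity over all $m$ edges per round is $O(m\log n)$. Hence both $H_0$ and (by running the same algorithm on the appropriate indicator inputs) $H_1$ admit classical algorithms with round complexity $O(-\log n/\log\lambda)$ and bit complexity $O(-m(\log n)^{2}/\log\lambda)$.

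Next I would quantize these classical procedures in a straightforward way: since all operations are reversible and no intermediate measurements are needed, one simply replaces classical channels by quantum ones and classical gates by their unitary counterparts, preserving the round and bit complexities up to a constant factor and yielding an $H_0$-algorithm and an $H_1$-algorithm in the required black-box form. Plugging these into Theorem~\ref{th:LE} then yields a leader election algorithm with round complexity $O(-\log n/\log\lambda)$ and bit complexity $O(-m(\log n)^{2}/\log\lambda)$, as claimed.

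The main obstacle is the quantitative analysis of the Kranakis--Krizanc--Berman algorithm: one must verify that the spectral-gap-based convergence argument indeed yields \emph{exact} computation of $H_0$ and $H_1$ after $O(-\log n/\log\lambda)$ rounds, and that the quantization preserves exactness in the presence of quantum superpositions of classical inputs. This however follows from the analysis already present in~\cite{KraKriBer94InfoComp}, together with the standard reversibility argument used throughout the paper to lift classical exact symmetric-function algorithms to their quantum counterparts without intermediate measurements.
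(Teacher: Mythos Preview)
Your proposal is correct and follows essentially the same route as the paper: use the random-walk-based algorithm of Kranakis--Krizanc--Berman~\cite{KraKriBer94InfoComp} as the black-box $H_0$- and $H_1$-algorithms, quantize them, and plug directly into Theorem~\ref{th:LE}. The paper's own justification is in fact just the single sentence preceding the corollary, so your write-up is more detailed than the original while matching its strategy exactly.
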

In particular, a unique leader can exactly be elected
in $O(n^{2/d}\log n)$ rounds
with bit complexity $O(n^{1+2/d}\log n)$
for an anonymous quantum $d$-dimensional torus for any integer constant $d\ge 2$, since $-1/\log \lambda\in O(n^{2/d})$.

We next consider a more general setting, in which only an upper bound $N$ of $n$ is given to each party.
In this case, our algorithm can be modified so that it attains the linear round complexity in $N$.
The algorithm, however, has a larger bit complexity than 
than $O(mN^2)$, which is attainable by an existing algorithm.
\begin{corollary}
\label{cr:LE with diameter/upperbound}
Let $G$ be any graph in $\calG_n$, and let $m$ be the number of edges in $G$.
If only an upper bound $N$ of the number $n$ of parties is given to every party,
the leader election problem can exactly be solved 
in $O(N)$ rounds with bit
complexity $O(mN^3)$ on an anonymous quantum network with the underlying graph $G$.
\end{corollary}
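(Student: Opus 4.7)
The obstacle is that Algorithm~QLE in the proof of Theorem~\ref{th:LE} uses the exact value of $n$ in two places: the one-qubit operator $\bfA$ each party applies to $\reg{R}$, and the per-party phase factors $\exp(i\theta_{s(n)}/n)$ and $\exp(i\phi_{s(n)}/n)$ driving the amplitude amplification, whose total phases over the $n$ parties come out to $\theta_{s(n)}$ and $\phi_{s(n)}$ only because of the $1/n$. Given only the upper bound $N$, the initial success probability $s(n)$ is unknown, so Theorem~\ref{th:LE} cannot be invoked directly.

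The plan is to run Algorithm~QLE in parallel once per guess $n^\prime\in\{2,3,\dots ,N\}$ of the true party count, with the $n^\prime$-th trial using disjoint registers, independent coin flips, and with $\bfA$, the phase angles $\theta_{s(n^\prime)},\phi_{s(n^\prime)}$, and the per-party phase divisor $n^\prime$ all computed from $n^\prime$ in place of $n$. The internal $H_0$- and $H_1$-subroutines are supplied by Claim~\ref{cl:H_0-Algorithm-complexity} and Theorem~\ref{th:H1_Upperbound}, both of which operate given only the upper bound $N$. For the unique trial with $n^\prime = n$, Theorem~\ref{th:LE} guarantees that a unique leader is elected with certainty; for the other trials no guarantee is needed. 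At the end of trial $n^\prime$, each party $i$ measures $\reg{R}^{(n^\prime)}$ to obtain a classical bit $e_i^{(n^\prime)}\in\{0,1\}$. To identify the trials that actually elected a unique leader, I then run the $H_1$-algorithm in parallel on the $N-1$ bit-vectors $(e_1^{(n^\prime)},\dots ,e_n^{(n^\prime)})$, so that every party learns the same indicator $b^{(n^\prime)}$ for each $n^\prime$. Since $b^{(n)}$ is always $\true$, every party can consistently compute $n^\ast:=\min\{n^\prime : b^{(n^\prime)}=\true\}$ and adopt $e_i^{(n^\ast)}$ as its final election status; exactly one party is then elected and anonymity is preserved.

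For the complexity, Claim~\ref{cl:H_0-Algorithm-complexity} applied with the trivial diameter bound $\Delta\le N$ yields an $H_0$-algorithm in $O(N)$ rounds with $O(Nm)$ bits, which Theorem~\ref{th:H1_Upperbound} lifts to an $H_1$-algorithm in $O(N)$ rounds with $O(N^2 m)$ bits. Substituting these into Theorem~\ref{th:LE} with $n^\prime$ in the role of $n$, one trial of Algorithm~QLE costs $O(N)$ rounds and $O(N^2 m)$ bits; running the $N-1$ trials in parallel on disjoint registers preserves the $O(N)$ round bound while multiplying the bits by a factor $N$, giving $O(N^3 m)$. The $N-1$ parallel $H_1$-verifications add another $O(N^3 m)$ bits in $O(N)$ rounds, and the final selection of $n^\ast$ is local. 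Summing gives the claimed $O(N)$ rounds and $O(mN^3)$ bits.

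The main obstacle I foresee is justifying that the $n^\prime\ne n$ trials cannot corrupt the outcome: amplitude amplification with a mis-specified success probability does not produce a pure solving state, so each $\reg{R}^{(n^\prime)}$ may collapse to any number of marked parties upon measurement. This is neutralized by committing to measurement at the end of every trial and treating the anonymous $H_1$-algorithm as a distributed ``valid-election'' certifier; correctness then reduces to the single certified success of the $n^\prime = n$ trial, which Theorem~\ref{th:LE} directly delivers.
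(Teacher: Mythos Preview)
Your proposal is correct and follows essentially the same route as the paper's proof: run Algorithm~QLE in parallel for each guess $n'\in\{2,\dots,N\}$ (using the $H_0$- and $H_1$-algorithms that work under the upper bound $N$ from Claim~\ref{cl:H_0-Algorithm-complexity} and Theorem~\ref{th:H1_Upperbound}), then verify each trial with the $H_1$-algorithm and select a certified one. Your version is in fact more explicit than the paper's---you spell out the per-party phase divisor $n'$, the measurement step, and the deterministic tie-breaking rule $n^\ast=\min\{n':b^{(n')}=\true\}$---but the underlying argument and complexity accounting are the same.
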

\begin{proof}
Theorem~\ref{th:H1_Upperbound} and Claim~\ref{cl:H_0-Algorithm-complexity} imply
that there exist an $H_0$-algorithm and an $H_1$-algorithm 
that work even if only an upper bound $N$ of $n$ is given to each party.

Since Theorem~\ref{th:LE} depends on the high success probability of the
base randomized algorithm (i.e., the algorithm in which every party
flips a coin that gives heads with probability $1/n$), the reduction
works only if $N=n$.
We thus modify the reduction in Theorem~\ref{th:LE} as follows:
(1) We attempt the quantum reduction in 
Theorem~\ref{th:LE} for every guess $n^{\prime}$ of $n$ in parallel,
where $n^{\prime}=2,\dots , N$.
(2) Each attempt is followed by performing the $H_1$-algorithm
to verify that a unique leader is elected.
Observe that for at least one  of 
$n^{\prime}=2,\dots , N$, a unique leader is elected,
which is correctly verified by Step (2) due to Theorem~\ref{th:H1_Upperbound}.
Therefore, the round complexity is $O(N)$ and the bit complexity is $O(mN^3)$.
\end{proof}


\section{Computing Boolean Functions}
Once a unique leader is elected, a spanning tree can be constructed 
by starting at the leader and traversing the underlying graph 
(e.g., in a depth first manner)
and the leader can assign
a unique identifier to every party by traversing the tree.
Moreover, if a unique leader exists, the underlying graph is recognizable, i.e.,
every party can know the adjacency matrix of the graph, as 
shown in Lemma~\ref{lm:graph-recognition}.
Hence,
it is possible to compute a wider class of Boolean functions than symmetric functions,
i.e.,
all Boolean functions that may depend on the graph topology (but are independent of
the way of assigning unique identifiers to parties). We call such functions \emph{computable functions}.

\begin{lemma}
\label{lm:graph-recognition}
Once a unique leader is elected  on an anonymous quantum network of any topology,
the underlying graph
can be recognized in $O(n)$ rounds with $O(n^3)$ bit complexity.
\end{lemma}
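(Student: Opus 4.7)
The plan is to use the elected leader to (i) build a rooted spanning tree of the underlying graph $G$, (ii) assign unique identifiers drawn from $\{1,\dots,n\}$ to each party, (iii) convergecast the rows of the induced adjacency matrix to the leader, and (iv) broadcast the full matrix back to every party. Since a unique leader exists, every subsequent step is essentially a classical distributed computation over $G$ that can be carried out without running into the symmetry obstructions of anonymity.

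First I would construct a rooted spanning tree $T$ by depth-first search initiated at the leader. Each party, on first receipt of a ``visit'' token, records the port through which it arrived (defining its parent in $T$), chooses the next unused port as its ``child pointer,'' and so on; the port-numbering $\sigma[v]$ makes the local choices well defined even though the network is anonymous. Along the way, the leader assigns the next unused integer in $\{1,\dots,n\}$ to each newly discovered node and transmits this identifier back through the DFS token. Since every edge is traversed $O(1)$ times and the DFS tokens carry at most an $O(\log n)$-bit identifier plus a constant-size control flag, the construction uses $O(n)$ rounds and $O(n \log n)$ bits along $T$, plus at most $O(m\log n)=O(n^3)$ bits spent on the $O(1)$-use traversals of non-tree edges needed to detect them. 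After this phase each party knows (a) its own identifier, (b) for each of its incident ports, whether the port leads to a tree edge or a back edge, and (c) the identifier of the neighbor at the other end of that port (obtained by having each party send its identifier once through every incident port in a single round, costing $O(m \log n)=O(n^3)$ bits).

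At this point each party $v$ locally possesses the $n$-bit row of the adjacency matrix indexed by its own identifier. I would then convergecast these $n$ rows along $T$ to the leader: each of the $n$ rows is an $n$-bit string and travels over at most $O(n)$ tree edges on its way to the root, contributing $O(n^3)$ total bit complexity; standard pipelining over $T$ keeps the round complexity at $O(n)$. Finally, the leader broadcasts the full $n\times n$ adjacency matrix ($n^2$ bits) down $T$; each of the $n-1$ tree edges carries $O(n^2)$ bits, giving $O(n^3)$ total bits and, via pipelining, $O(n)$ rounds. Summing the four phases yields $O(n)$ rounds and $O(n^3)$ bit complexity.

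The main conceptual obstacle is coping with anonymity during tree construction: before identifiers are assigned, no party can a priori distinguish its neighbors, so one must be careful that the DFS is well defined despite potential symmetries. This is resolved by exploiting the port numbering $\sigma[v]$, which every party has locally and deterministically; the DFS is driven entirely by these local port orderings together with the leader-issued identifiers, so the traversal and the resulting identifier assignment are unambiguous. Once this is in place, the remaining gather/broadcast steps are routine tree operations, and the claimed complexities follow by direct accounting.
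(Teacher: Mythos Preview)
Your proposal is correct and follows essentially the same approach as the paper: build a spanning tree from the leader, assign identifiers along it, have neighbors exchange identifiers, convergecast the adjacency information to the leader, and broadcast the full matrix back. The only cosmetic differences are that you merge the tree-construction and identifier-assignment phases into a single DFS pass and make explicit the role of the local port numbering $\sigma[v]$ in breaking symmetry during the traversal, whereas the paper treats these as two separate $O(n)$-round steps and leaves the port-numbering point implicit; the resulting complexity bounds are identical.
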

\begin{proof}
    Once a unique leader has been elected, the following procedure can
  recognize the underlying graph.  First construct a spanning tree in
  $O(n)$ rounds with $O(m)$ bit complexity by traversing the graph
for the number $m$ of the edges of the underlying graph.
  Second assign a unique identifier to each party in $O(n)$ rounds
  with 
bit complexity $O(n\log n)$ by
  traversing the spanning tree starting at the leader
(the first and second steps can be merged, but we here describe them separately just for simplicity).
Finally, gather
  into the leader the information of what parties are adjacent to each
  party by conveying adjacency matrices along the spanning tree as follows:
Each party communicates with each adjacent party to know
  the identifier of the adjacent party in one round with $O(m\log n)$ bit complexity.
Next, each leaf node $i$ prepares an
  $n$-by-$n$ adjacency matrix with all entries being zero, puts $1$ in the entries $(i,j)$ 
of the matrix for all adjacent parties $j$, and then sends the matrix to its parent
  node of the tree with $O(n^2)$ bit complexity.  Every internal node $k$ of the tree merges all received matrices,
 puts $1$ in the entries $(k,j)$ for all adjacent parties $j$, and then sends
  the resulting matrix to its parent node.  Finally, the leader can obtain the
  adjacency matrix of the underlying graph, and he then broadcasts
  the matrix along the tree.  These gathering and broadcasting steps take $O(n)$ rounds
  with 
  bit complexity $O(n^3)$. 
\end{proof}

\noindent We now give a proof of Corollary~\ref{cr:Boolean}.

\begin{proofof}{Corollary~\ref{cr:Boolean}}
Once a unique leader is elected and the underlying graph is
recognized, it is sufficient for the leader to gather 
the input bit of every party
with his identifier of $O(\log n)$ bits along the spanning tree.
This input gathering can be done in $O(n)$ rounds with bit
complexity $O(n^2\log n)$.  Thus, together with Corollary~\ref{cr:LE} and
Lemma~\ref{lm:graph-recognition}, 
any computable Boolean function
can be computed in $O(n)$ rounds with bit complexity $O(mn^2)$
for the number $m$ of the edges of the underlying graph.
More generally, suppose that every party $i$ has a qubit so that the
$n$ parties share some $n$-qubit state $\xi$, and let $\rho$ be any
$n$-qubit quantum state computable from $\xi$ and the underlying
graph.  Then, by replacing an input bit with an input qubit for each
party in the above proof for classical case, the leader can gather the
$n$ qubits to have $\xi$ in his local space. Now the leader can
locally generate $\rho$ from $\xi$, and send back the corresponding
qubit to each party to share $\rho$, again along the spanning tree,
in $O(n)$ rounds with $O(n^2 \log n)$ bit complexity.
This completes the
proof of Corollary~\ref{cr:Boolean}. 
\end{proofof}


\section{GHZ-State Sharing Problem}
\label{sec:ProofOfConsensus}
In this section, we prove Theorem~\ref{th:QuantumConsensus} by
reducing the GHZ-state sharing problem
to computing function $F_k$,
where
$F_k$ is a function such that
$F_k(x_1, \ldots, x_n) = \sum_{i=1}^n x_i \pmod k$
for distributed inputs $x_i\in \set{0,\dots ,k-1}$.  Hereafter, we assume
the existence of an $F_k$-algorithm.
The basic idea can be well understood by considering the case
of $k=2$. 
\subsection{Basic Case ($\boldsymbol{k=2}$)}
The algorithm consists of two phases. The first phase runs 
two attempts of the same procedure
in parallel, each of which lets all parties
share either $(\ket{0}^{\otimes n}\+\ket{1}^{\otimes n})/\sqrt{2}$ or
$(\ket{0}^{\otimes n}-\ket{1}^{\otimes n})/\sqrt{2}$ each with
probability $1/2$.  If the parties share at least one copy of
$(\ket{0}^{\otimes n}\+\ket{1}^{\otimes n})/\sqrt{2}$ after the first
phase, they succeed. If the parties share
two copies of $(\ket{0}^{\otimes n} - \ket{1}^{\otimes n})/\sqrt{2}$,
the second phase distills the
state $(\ket{0}^{\otimes n}\+\ket{1}^{\otimes n})/\sqrt{2}$ 
from them
with
classical communication and partial measurements.  A more detailed
description is as follows.

Let $i\in \set{1,2}$ be the index of each attempt of the procedure performed in the first phase.
The first phase performs the following procedure for each $i$
(notice that function $F_2$ is equivalent to the parity of distributed $n$ bits).

\begin{enumerate}
\item Every party prepares two single-qubit registers $\reg{R}_i$ and $\reg{S}_i$ initialized to $\ket{0}$.
\item Every party applies Hadamard operator 
$\bfH=
\frac{1}{\sqrt{2}}
\big(
\begin{smallmatrix}
  1 & 1\\
1 & -1 
\end{smallmatrix}
\big)
$
 to $\reg{R}_i$:
$ \ket{0}^{\otimes n}\to \frac{1}{\sqrt{2^n}}\sum_{\vec{x}\in \set{0,1}^n} \ket{\vec{x}} $
\item All parties collaborate to 
compute the parity (i.e., the sum modulo $2$) of the contents of $\reg{R}_i$ of all parties
and store the result into $\reg{S}_i$ of each party:
\[ \frac{1}{\sqrt{2^n}}\sum_{\vec{x}\in \set{0,1}^n}
\ket{\vec{x}}\ket{0}^{\otimes n}\to
\frac{1}{\sqrt{2^n}}\sum_{\vec{x}\in \set{0,1}^n}
\ket{\vec{x}}\big|\abs{\vec{x}}\!\!\!\pmod
2\big\rangle^{\otimes n}. \]
\item Every party measures $\reg{S}_i$ in the basis $\set{\ket{0},\ket{1}}$ and 
applies $\bfH$ to $\reg{R}_i$:
\[
\begin{cases}
\displaystyle{\frac{1}{\sqrt{2^{n-1}}}\sum_{\vec{x}\in \set{0,1}^n\colon \abs{\vec{x}}\text{ is even}}} \ket{\vec{x}}
\to 
\frac{\ket{0}^{\otimes n}\+\ket{1}^{\otimes n}}{\sqrt{2}} & \text{if $\ket{0}$ was measured,}\\
&\\
\displaystyle{\frac{1}{\sqrt{2^{n-1}}}\sum_{\vec{x}\in \set{0,1}^n\colon \abs{\vec{x}}\text{ is odd}}} \ket{\vec{x}}
\to 
\frac{\ket{0}^{\otimes n}-\ket{1}^{\otimes n}}{\sqrt{2}} & \text{if $\ket{1}$ was measured.}
\end{cases}
\]
\end{enumerate}
If the state over all $\reg{R}_i$'s is $(\ket{0}^{\otimes n}\+\ket{1}^{\otimes n})/\sqrt{2}$ for at least one of $i=1,2$, we are done; otherwise,
we go on to the second phase. Observe that the state over all $\reg{R}_1$'s and $\reg{R}_2$'s is
\[
(\ket{0}_{\reg{R}_1}^{\otimes n}-\ket{1}_{\reg{R}_1}^{\otimes n})\otimes
(\ket{0}_{\reg{R}_2}^{\otimes n}-\ket{1}_{\reg{R}_2}^{\otimes n})
=(\ket{0}_{\reg{R}_1}^{\otimes n}\ket{0}_{\reg{R}_2}^{\otimes n}
\+
\ket{1}_{\reg{R}_1}^{\otimes n}\ket{1}_{\reg{R}_2}^{\otimes n}
)
-(
\ket{0}_{\reg{R}_1}^{\otimes n}\ket{1}_{\reg{R}_2}^{\otimes n}
\+
\ket{1}_{\reg{R}_1}^{\otimes n}\ket{0}_{\reg{R}_2}^{\otimes n}
),
\]
where we omit 
normalization coefficients.
If every party locally computes the parity of the contents of $\reg{R}_1$'s and $\reg{R}_2$'s and measures the result,
the entire state will be either 
$\ket{0}_{\reg{R}_1}^{\otimes n}\ket{0}_{\reg{R}_2}^{\otimes n}\+\ket{1}_{\reg{R}_1}^{\otimes n}\ket{1}_{\reg{R}_2}^{\otimes n}$ or
$\ket{0}_{\reg{R}_1}^{\otimes n}\ket{1}_{\reg{R}_2}^{\otimes n}\+\ket{1}_{\reg{R}_1}^{\otimes n}\ket{0}_{\reg{R}_2}^{\otimes n}.$
It is easy to see that 
the state $\ket{0}^{\otimes n}\+\ket{1}^{\otimes n}$ can be obtained
from any of these states 
by applying a CNOT to $\reg{R}_2$
using $\reg{R}_1$ as control (all $\reg{R}_2$'s are disentangled).
If we use a quantum simulation of a classical algorithm
that deterministically computes the parity of distributed $n$ bits (e.g., view-based algorithms~\cite{YamKam96IEEETPDS-1,KraKriBer94InfoComp,TanKobMat05STACS}),
our algorithm uses only a constant-sized gate set.

\subsection{General Case ($\boldsymbol{k>2}$)}
In the following, we assume $k$-level qudits are available 
for simplicity (the algorithm can easily be carried over the case where we are allowed to use only qubits).
Any pure state of a $k$-level
qudit can be represented as $\sum
_{i=0}^{k-1} \alpha_i\ket{i}$ with complex numbers $\alpha_i$ such that
$\sum _{i=0}^{k-1}\abs{\alpha_i}^2=1$ (for $k=2$, this is just a
qubit).

Our algorithm uses the following operator $\bfW _k$ over one $k$-level qudit, instead of $\bfH$
used in the case of $k=2$: For $x\in \set{0,\dots ,k\!\-1}$,
\[ \bfW _k \ket{x}=\frac{1}{\sqrt{k}}\sum _{j=0}^{k-1}\omega _k ^{x j}\ket{j},\]
where $\omega _k = e^{\frac{2\pi}{k}i}$. 
In what follows,
we denote $\left(\sum _{x=0}^{k-1}\omega_k ^{t\cdot x}\ket{x}^{\otimes n}\right)/{\sqrt{k}}$ by $\cat{k}{t}$.
For instance, $\cat{2}{0}$ denotes ${(\ket{0}^{\otimes n}+\ket{1}^{\otimes n})}/{\sqrt{2}}$.

\subsubsection{First Phase}
The first phase is for the purpose of 
sharing $k$ states drawn from the set 
$
 \left\{
\cat{k}{t}
 \colon t\in \set{0,\dots , k-1}\right
\}.
$
The operations are described as follows, which are similar to the case of $k=2$.
\begin{algorithm*}{First Phase}
  \begin{flushleft}
\hspace{10pt} For $i=1,2,\dots, k$, perform the following operations in parallel:
  \end{flushleft}
  \begin{step}
\setlength{\itemsep}{0pt}
  \item Prepare a single-qudit register $\reg{R}_i$ initialized to $\ket{0}$.
  \item Apply $\bfW _k$ to the qudit in $\reg{R}_i$, which maps the  state $\ket{0}^{\otimes n}$
into $\left( \bfW _k \ket{0}\right)^{\otimes n}=
\frac{1}{\sqrt{k^n}} \sum _{y=0}^{k^n-1}\ket{y}.$
  \item Run an $F_k$-algorithm to compute the value of 
$F_k(y):=\sum _{j=1}^{n}y_j\pmod k$, where $y_j$ is the content of $\reg{R}_i$ of the $j$th party, 
and store the result into a single-qudit register $\reg{S}_i$:
\[
\frac{1}{\sqrt{k^n}} \sum _{y=0}^{k^n-1}\ket{y}\ket{0}^{\otimes n}\to
\frac{1}{\sqrt{k^n}} \sum _{y=0}^{k^n-1}\ket{y}\ket{F_k(y)}^{\otimes n}.
\]
  \item Measure $\reg{S}_i$ in the  basis $\set{\ket{0},\dots, \ket{k-1}}$.
If $s_i\in \set{\ket{0},\dots ,\ket{k-1}}$ is measured, the state is 
\[
\frac{1}{\sqrt{k^{n-1}}} \sum _{
\stackrel{\scriptstyle y\in \set{0,\dots ,k^n-1}\colon}
{\scriptstyle \sum _{j=1}^{n} y_j=s_i\!\!\!\!\pmod{k}}}\ket{y}.
\]
  \item Apply $\bfW _k^{\dagger}$ to $\reg{R}_i$.
  \end{step}
\end{algorithm*}

The following lemma implies that, for each $i\in \set{1,\dots ,k}$,
the state of $\reg{R}_i$'s after the first phase is 
$\cat{k}{-s_i\bmod k}$.
If $s_i=0$ for some $i$, we are done.
Otherwise, the parties perform the second phase (described later)
to distill the state $\cat{k}{0}$
from the $k$ states shared by all parties.

\begin{lemma}
\label{lm:WAppliedToCat}
  \begin{equation*}
 \bfW _k ^{\otimes n}\left(\frac{1}{\sqrt{k}} \sum _{x=0}^{k-1}\omega_k ^{t\cdot x}\ket{x}^{\otimes n} \right)
=\frac{1}{\sqrt{k^{n-1}}}\sum_{
\stackrel{\scriptstyle y\in \set{0,\dots ,k^n-1}\colon}
{\scriptstyle t+\sum _{j=1}^{n} y_j=0\pmod k}} \ket{y}.
  \end{equation*}
\end{lemma}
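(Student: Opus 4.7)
The plan is a direct computation using the definition of $\bfW_k$ together with the standard root-of-unity identity $\sum_{x=0}^{k-1}\omega_k^{xm} = k$ if $m \equiv 0 \pmod{k}$ and $0$ otherwise.

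First I would expand $\bfW_k^{\otimes n}\ket{x}^{\otimes n}$ by applying the definition of $\bfW_k$ to each tensor factor, obtaining
\[
\bfW_k^{\otimes n}\ket{x}^{\otimes n} = \frac{1}{\sqrt{k^n}} \sum_{y_1,\ldots,y_n \in \{0,\ldots,k-1\}} \omega_k^{x(y_1 + \cdots + y_n)} \ket{y_1,\ldots,y_n},
\]
where I identify $y \in \{0,\ldots,k^n-1\}$ with its base-$k$ digits $(y_1,\ldots,y_n)$.

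Next I would substitute this into the left-hand side and exchange the order of summation. This gives
\[
\bfW_k^{\otimes n}\left(\frac{1}{\sqrt{k}} \sum_{x=0}^{k-1} \omega_k^{tx} \ket{x}^{\otimes n}\right) = \frac{1}{\sqrt{k^{n+1}}} \sum_{y} \left( \sum_{x=0}^{k-1} \omega_k^{x(t + \sum_{j=1}^n y_j)} \right) \ket{y}.
\]

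Finally I would invoke the root-of-unity identity: the inner sum equals $k$ when $t + \sum_{j=1}^n y_j \equiv 0 \pmod{k}$ and vanishes otherwise. Collecting terms and simplifying $k/\sqrt{k^{n+1}} = 1/\sqrt{k^{n-1}}$ yields the claimed right-hand side. There is no real obstacle here; the lemma is a straightforward Fourier computation, and the main thing to be careful about is just bookkeeping the normalization constants and the indexing of $y$ as an $n$-digit string in base $k$.
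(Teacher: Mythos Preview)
Your proof is correct and follows essentially the same approach as the paper: a direct computation expanding $\bfW_k^{\otimes n}\ket{x}^{\otimes n}$ and then collapsing the sum over $x$ via the root-of-unity identity. Your version is in fact more streamlined than the paper's, which first handles the case $t=0$ separately and re-derives the identity $\sum_{x=0}^{k-1}\omega_k^{xm}=0$ for $m\not\equiv 0$ through a case analysis on $\gcd(m,k)$, whereas you treat all $t$ at once and simply invoke the standard identity.
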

The proof is given in Appendix.

\subsubsection{Second Phase}
Suppose that, after the first phase, all parties share $k$ states,
$\cat{k}{-s_i \bmod k}$ with $s_i\neq 0$ for $i=1,\dots ,k$.
Then, there must be two integers $l,m\in \set{1,\dots ,k}$ with $s_l=s_m$, since $s_l,s_m \in \set{1,\dots , k-1}$.
We can distill the state $\cat{k}{0}$ from the states
$\cat{k}{-s_l \bmod k}$ and $\cat{k}{-s_m \bmod k}$ as follows.

Suppose that $n$ parties share two copies of $\cat{k}{t}$ for any $t\in \set{1,\dots ,k-1}$ for their quantum registers
$\reg{R}_1$'s and $\reg{R}_2$'s. Namely, the state over all $\reg{R}_1$'s and $\reg{R}_2$'s is
\[
    \bigg( \underbrace{\frac{1}{\sqrt{k}}\sum _{x=0}^{k-1}\omega_k ^{t\cdot x}\ket{x}^{\otimes n}}_{\reg{R}_1\text{'s}}\bigg)\otimes
   \bigg( \underbrace{\frac{1}{\sqrt{k}}\sum _{x=0}^{k-1}\omega_k ^{t\cdot x}\ket{x}^{\otimes n}}_{\reg{R}_2\text{'s}}\bigg).\]
By rearranging the registers, the state is
\[
\frac{1}{k}
\sum _{r=0}^{k-1}\sum _{x=0}^{k-1}\omega _k^{t\cdot r}
\left(\ket{x}_{\reg{R}_1}\ket{r-x\!\!\!\!\pmod{k}}_{\reg{R}_2}\right)^{\otimes n}.
\] 
Every party then performs the following operations.
  \begin{algorithm*}{Second Phase}
    \begin{step}
    \item Add the content of $\reg{R}_1$ to the content of
      $\reg{R}_2$ under modulo $k$: The state becomes
\[
\frac{1}{k}
\sum _{r=0}^{k-1}\sum _{x=0}^{k-1}\omega _k^{t\cdot r}
\left(\ket{x}_{\reg{R}_1}\
\ket{r\bmod k}_{\reg{R}_2}
\right)^{\otimes n}.
\] 
    \item 
Measure $\reg{R}_2$ in the basis $\set{\ket{0},\dots ,\ket{k-1}}$ and let $r$ be the measurement result: The state is
\[
\frac{\omega _k^{t\cdot r}}{\sqrt{k}}
\sum _{x=0}^{k-1}
\ket{x}_{\reg{R}_1}
^{\otimes n}.
\] 
    \item Output $\reg{R}_1$.
    \end{step}
  \end{algorithm*}
\subsubsection{Proof of Theorem~\ref{th:QuantumConsensus}}
The correctness of the algorithm follows from the above description of the algorithm. The communication occurs only when computing $F_k$ (in the
first phase). Thus, the algorithm works in
$O(Q^{\operatorname{rnd}}(\calF_k))$ 
rounds with bit complexity 
$O(Q^{\operatorname{bit}}(\calF_k))$,
where $\calF_k$ is the given $F_k$-algorithm.
The algorithm works with  the operators $\bfW _k, \bfW _k^{\dagger}$,
the operators for computing classical functions (such as addition under modulo $k$) that are independent of $n$,
except the given $F_k$-algorithm. Therefore, the algorithm can be implemented with a gate set
whose size is finite and independent of $n$ if an $F_k$-algorithm is given. \hfill \qed


\section{Conclusion}
We proved that the leader election problem $\LE_n$ can exactly be
solved with at most the same complexity (up to constant factor) as
that of computing symmetric Boolean functions on an anonymous quantum
network. In particular, the hardness of the leader election problem is
characterized by that of computing $H_0$, a function of checking if
all parties each have the bit $0$. This shows that quantum information
can change the hardness relation among distributed computing problems
(recall that $H_0$ can be computed for all network topologies but $\LE_n$
cannot, on anonymous classical networks).

In the proof, we used (given) distributed algorithms for computing
symmetric Boolean functions to implement phase-shift operators of
amplitude amplification. Here, assuming that the underlying graph is
undirected, we were able to erase the garbage left by the algorithms
by the standard technique of inverting all operations and
communications performed.
We do not know if our proof works (with
modifications) even on directed networks.  It is also a open question as to whether
$\LE_n$ can exactly be solved in rounds linear in the number of
parties when the underlying graph is directed (notice that the leader
election algorithms in Ref.~\cite{TanKobMat05STACS} work with some
modifications even on directed networks,  but they require rounds super-linear in the number of parties).

We also gave an quantum algorithm that exactly solves the GHZ-state sharing problem
in rounds linear in the number of parties with a constant-sized gate set, if the network is undirected.
It is still open whether the problem can exactly be solved in linear rounds on directed networks.
If much more rounds are allowed, we can solve the problem on directed networks by modifying the idea
in Ref.~\cite{TanKobMat05STACS}.


\newpage \appendix
\section*{Appendix}
\section{GHZ-State Sharing Problem}
\label{appdx:QuantumConsensusProblem}
\begin{proofof}{Lemma~\ref{lm:WAppliedToCat}}
We first prove the lemma for $t=0$. The proof can easily be generalized to the case of $t>0$.
Notice that
\begin{equation}
\label{eq:lm:WAppliedToCat}
  \bfW _k ^{\otimes n}\left(\frac{1}{\sqrt{k}} \sum _{x=0}^{k-1}\ket{x}^{\otimes n}\right)
=\frac{1}{\sqrt{k^{n+1}}}\sum _{y=0}^{k^n-1}\alpha_y\ket{y}
= \frac{1}{\sqrt{k^{n+1}}}\sum _{y=0}^{k^n-1} (\alpha_y^{(0)}+\dots +\alpha_y^{(k-1)})\ket{y},
\end{equation}
where $\alpha_y:=\sum _{x=0}^{k-1}\alpha_y^{(x)}$ and  $\alpha_y^{(x)}/\sqrt{k^n}=\bra{y}\bfW _k^{\otimes n}\ket{x}^{\otimes n}$.
Let $y=y_1y_2\dots y_n$ for  $y_j\in \set{0,\dots ,k-1}$.
By the definition of $\bfW _k$, we have
\begin{equation*}
  \alpha_y^{(x)}=\prod _{j=1}^{n}\omega_k^{x\cdot y_j}=(\alpha_y^{(1)})^x\hspace{3mm}\text{(for $0\le x\le k-1$)}.
\end{equation*}
Therefore, the following claim holds.

\begin{claim}
\begin{equation*}
  \alpha_y = \sum _{x=0}^{k-1}\alpha_y^{(x)}=\sum _{x=0}^{k-1}(\alpha_y^{(1)})^x,
\end{equation*}
where $\alpha_y^{(1)}\in \set{1,\omega_k^1,\omega_k^2, \dots ,\omega_k^{k-1}}$.
\end{claim}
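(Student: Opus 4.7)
The plan is to prove the Claim by directly unfolding the definition of $\alpha_y^{(x)}$ just displayed. Writing $y = y_1 y_2 \cdots y_n$ with each $y_j \in \{0, \ldots, k-1\}$, the excerpt has established $\alpha_y^{(x)} = \prod_{j=1}^{n} \omega_k^{x \cdot y_j}$. The first move is to combine the product into a single exponential, giving $\alpha_y^{(x)} = \omega_k^{x \cdot \sum_{j=1}^{n} y_j}$, and then to pull the exponent $x$ outside, yielding $\alpha_y^{(x)} = \bigl( \omega_k^{\sum_{j=1}^{n} y_j} \bigr)^{x}$.

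Specializing to $x = 1$ identifies $\alpha_y^{(1)} = \omega_k^{\sum_{j=1}^{n} y_j}$, so $\alpha_y^{(x)} = (\alpha_y^{(1)})^{x}$ for every $x \in \{0, \ldots, k-1\}$. Summing over $x$ immediately gives both equalities of the claim: $\alpha_y = \sum_{x=0}^{k-1} \alpha_y^{(x)} = \sum_{x=0}^{k-1} (\alpha_y^{(1)})^{x}$. The membership $\alpha_y^{(1)} \in \{ 1, \omega_k, \omega_k^{2}, \ldots, \omega_k^{k-1} \}$ then follows because $\sum_{j} y_j$ is a nonnegative integer and $\omega_k^{k} = 1$, so reducing the exponent modulo $k$ places $\alpha_y^{(1)}$ in the stated set of $k$-th roots of unity.

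There is no genuine obstacle here: the whole claim is a bookkeeping consequence of the exponent identity $\prod_j \omega_k^{x y_j} = \omega_k^{x \sum_j y_j}$, applied once and then pattern-matched against the $x=1$ instance. The point of isolating the claim in this form is to set up the geometric-series evaluation that must come next in the proof of Lemma~\ref{lm:WAppliedToCat}: since $\sum_{x=0}^{k-1} \zeta^{x}$ equals $k$ when $\zeta = 1$ and $0$ for any other $k$-th root of unity $\zeta$, the coefficient $\alpha_y$ in the displayed expansion of Eq.~\eqref{eq:lm:WAppliedToCat} will be nonzero exactly for those $y$ with $\sum_{j} y_j \equiv 0 \pmod k$, which is precisely the support of the target state asserted by the lemma for $t = 0$. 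The extension to $t > 0$ should then be a straightforward modification in which the overall phase $\omega_k^{t \cdot x}$ simply shifts the resonance condition from $\sum_j y_j \equiv 0$ to $t + \sum_j y_j \equiv 0 \pmod k$.
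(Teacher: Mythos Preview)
Your proposal is correct and follows essentially the same approach as the paper: the paper derives $\alpha_y^{(x)} = \prod_{j=1}^{n}\omega_k^{x\cdot y_j} = (\alpha_y^{(1)})^x$ immediately before the claim and then simply says ``Therefore, the following claim holds,'' which is exactly the exponent-combining argument you spell out. Your additional remarks about the geometric-series evaluation and the shift for $t>0$ also accurately anticipate the remainder of the paper's proof of the lemma.
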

We next calculate $\alpha_y$ for each $y$.
If $\alpha_y^{(1)}=1$, then $\alpha_y=\sum _{x=0}^{k-1}(\alpha_y^{(1)})^x=k$ by the above claim.
If $\alpha_y^{(1)}=\omega _k^p$ for some number $p$ prime to $k$,
then 
\[ \alpha_y=\sum _{x=0}^{k-1}\omega_k^{px}=0.  \]
Suppose that $\alpha_y^{(1)}=\omega _k^q$ for some number $q$ not prime to $k$.
Let $g$ be the greatest common divisor (GCD) of $q$ and $k$. Since $\alpha_y^{(1)}=e^{2\pi\frac{q/g}{k/g}i}$ 
is the $(k/g)$th root of 1, we have
$\sum _{j=0}^{k/g-1}(\alpha_y^{(1)})^j=0.$
Therefore, 
\[
\alpha_y=\sum _{x=0}^{k-1}(\alpha_y^{(1)})^x=\sum _{m=1}^{g}\sum _{j=0}^{k/g-1}(\alpha_y^{(1)})^j=0.
\]
Hence, only the basis states $\ket{y}$ such that $\alpha_y^{(1)}=1$ have non-zero amplitudes.
Since $\alpha_y^{(1)}=\prod _{j=1}^{n}\omega_k^{y_j}=\omega_k^{\sum_{j=1}^{n}y_j}$,
\[
\sum _{y=0}^{k^n-1}\alpha_y\ket{y}
= 
\sum_{
\stackrel{\scriptstyle y\in \set{0,\dots ,k^n-1}\colon}
{\scriptstyle \sum _{j=1}^{n} y_j=0\pmod k}} k\ket{y}.
\]
Thus, the lemma for $t=0$ follows  from eq. \eqref{eq:lm:WAppliedToCat}.

We now consider the case of $t>0$.
Suppose that
\begin{equation*}
  \bfW _k^{\otimes n}\left(\frac{1}{\sqrt{k}}\sum _{x=0}^{k-1}\omega_k ^{t\cdot x}\ket{x}^{\otimes n}\right)
=\frac{1}{\sqrt{k^{n+1}}}\sum _{y=0}^{k^n-1}\beta_y\ket{y}
= \frac{1}{\sqrt{k^{n+1}}}\sum _{y=0}^{k^n-1} (\beta_y^{(0)}+\dots +\beta_y^{(k-1)})\ket{y},
\end{equation*}
where $\beta_y=\sum _{x=0}^{k-1}\beta_y^{(x)}$ and  $\beta_y^{(x)}/{\sqrt{k^n}}=\bra{y}\bfW _k^{\otimes n}(\omega_k ^{t\cdot x}\ket{x}^{\otimes n})$. 
Then, we have  $\beta_y^{(x)}=\omega_k ^{t\cdot x}\alpha_y^{(x)}=(\omega_k ^{t}\alpha_y^{(1)})^x.$ This implies that
\[  \beta_y = \sum _{x=0}^{k-1}\beta_y^{(x)}=\sum _{x=0}^{k-1}(\beta_y^{(1)})^x. \]
By an argument similar to the case of $t=0$,
only the basis states $\ket{y}$ such that $\beta_y^{(1)}=1$ have non-zero amplitudes.
The lemma follows from   
$\beta_y^{(1)}=\omega_k^{t+\sum_{j=1}^{n}y_j}$.
\end{proofof}



\begin{thebibliography}{10}

\bibitem{afek-matias94}
Yehuda Afek and Yossi Matias.
\newblock Elections in anonymous networks.
\newblock {\em Information Computation}, 113(2):312--330, 1994.

\bibitem{Ang80STOC}
Dana Angluin.
\newblock Local and global properties in networks of processors (extended
  abstract).
\newblock In {\em Proceedings of the Twentieth Annaul ACM Symposium on Theory
  of Computing}, pages 82--93, 1980.

\bibitem{AttSniWar88JACM}
Hagit Attiya, Marc Snir, and Manfred~K. Warmuth.
\newblock Computing on an anonymous ring.
\newblock {\em Journal of the ACM}, 35(4):845--875, 1988.

\bibitem{BraHoyMosTap02AMS}
Gilles Brassard, Peter H{\o}yer, Michele Mosca, and Alain Tapp.
\newblock Quantum amplitude amplification and estimation.
\newblock In {\em Quantum Computation and Quantum Information: A Millennium
  Volume}, volume 305 of {\em AMS Contemporary Mathematics Series}, pages
  53--74. AMS, 2002.

\bibitem{chi-kim98}
Dong~Pyo Chi and Jinsoo Kim.
\newblock Quantum database search by a single query.
\newblock In {\em Proceedings of the First NASA Int. Conf. Quantum Computing
  and Quantum Communications (QCQC)}, volume 1509 of {\em LNCS}, pages
  148--151. Springer, 1998.

\bibitem{DenPan06SIGACT}
Vasil~S. Denchev and Gopal Pandurangan.
\newblock Distributed quantum computing: A new frontier in distributed systems
  or science fiction?
\newblock {\em ACM SIGACT News}, 39(3):77--95, 2006.

\bibitem{DHoPan06QIC}
Ellie D'Hondt and Prakash Panangaden.
\newblock The computational power of the {W} and {GHZ} states.
\newblock {\em Quantum Information and Computation}, 6(2):173--183, 2006.

\bibitem{GavKosMar09ARXIV}
Cyril Gavoille, Adrian Kosowski, and Marcin Markiewicz.
\newblock What can be observed locally? {R}ound-based models for quantum
  distributed computing.
\newblock arXiv:0903.1133, 2009.

\bibitem{ItaRod81FOCS}
Alon Itai and Michael Rodeh.
\newblock Symmetry breaking in distributive networks.
\newblock In {\em Proceedings of the Twenty-Second Annual IEEE Symposium on
  Foundations of Computer Science}, pages 150--158, 1981.

\bibitem{ItaRod90InfoComp}
Alon Itai and Michael Rodeh.
\newblock Symmetry breaking in distributed networks.
\newblock {\em Information Computation}, 88(1):60--87, 1990.

\bibitem{KitSheVya02Book}
Alexei~Yu. Kitaev, Alexander~H. Shen, and Mikhail~N. Vyalyi.
\newblock {\em Classical and Quantum Computation}, volume~47 of {\em Graduate
  Studies in Mathematics}.
\newblock AMS, 2002.

\bibitem{KraKri92SPDP}
Evangelos Kranakis and Danny Krizanc.
\newblock Distributed computing on cayley networks (extended abstract).
\newblock In {\em Proceedings of the Fourth IEEE Symposium on Parallel and
  Distributed Processing}, pages 222--229. IEEE Computer Society Press, 1992.

\bibitem{KraKri97JALG}
Evangelos Kranakis and Danny Krizanc.
\newblock Distributed computing on anonymous hypercube networks.
\newblock {\em Journal of Algorithms}, 23(1):32--50, 1997.

\bibitem{KraKriBer94InfoComp}
Evangelos Kranakis, Danny Krizanc, and Jacov van~den Berg.
\newblock Computing boolean functions on anonymous networks.
\newblock {\em Information Computation}, 114(2):214--236, 1994.

\bibitem{Lyn96Book}
Nancy~A. Lynch.
\newblock {\em Distributed Algorithms}.
\newblock Morgan Kaufman Publishers, 1996.

\bibitem{NieChu00Book}
Michael~A. Nielsen and Isaac~L. Chuang.
\newblock {\em Quantum Computation and Quantum Information}.
\newblock Cambridge University Press, 2000.

\bibitem{PalSinKum03ARXIV}
Sudebkumar~Prasant Pal, Sudhir~Kumar Singh, and Somesh Kumar.
\newblock Multi-partite quantum entanglement versus randomization: Fair and
  unbiased leader election in networks.
\newblock quant-ph:/0306195, 2003.

\bibitem{TanKobMat05STACS}
Seiichiro Tani, Hirotada Kobayashi, and Keiji Matsumoto.
\newblock Exact quantum algorithms for the leader election problem.
\newblock In {\em Proceedings of the Twenty-Second Symposium on Theoretical
  Aspects of Computer Science (STACS 2005)}, volume 3404 of {\em LNCS}, pages
  581--592. Springer, 2005.
\newblock (Full version in {\tt http://jp.arxiv.org/abs/0712.4213)}.

\bibitem{YamKam88PODC}
Masafumi Yamashita and Tsunehiko Kameda.
\newblock Computing on an anonymous network.
\newblock In {\em Proceedings of the Seventh ACM Symposium on Principles of
  Distributed Computing}, pages 117--130, 1988.

\bibitem{YamKam96IEEETPDS-1}
Masafumi Yamashita and Tsunehiko Kameda.
\newblock Computing on anonymous networks: Part {I} -- characterizing the
  solvable cases.
\newblock {\em IEEE Transactions on Parallel Distributed Systems}, 7(1):69--89,
  1996.

\bibitem{YamKam96IEEETPDS-2}
Masafumi Yamashita and Tsunehiko Kameda.
\newblock Computing on anonymous networks: Part {II} -- decision and membership
  problems.
\newblock {\em IEEE Transactions on Parallel Distributed Systems}, 7(1):90--96,
  1996.

\bibitem{yamashita-kameda99}
Masafumi Yamashita and Tsunehiko Kameda.
\newblock Leader election problem on networks in which processor identity
  numbers are not distinct.
\newblock {\em IEEE Transactions on Parallel and Distributed Systems},
  10(9):878--887, 1999.

\end{thebibliography}
\end{document}